\newtheorem{theorem}{Theorem}[section]
\newtheorem{lemma}[theorem]{Lemma}
\theoremstyle{definition}
\newtheorem{definition}{Definition}[section]
\newcommand{\UBLc}{\mathcal{UBL}}
\newcommand{\Oc}{\mathcal{O}}
\newcommand{\Pc}{\mathcal{P}}
\newcommand{\Ucc}{\mathcal{U}}
\newcommand{\Jc}{\mathcal{J}}
\newcommand{\BLc}{\mathcal{BL}}
\newcommand{\band}{\mathop{\&}}
\newcommand{\bor}{\mathop{|}}
\newcommand{\Dia}{\diamondsuit}
\newcommand{\Mf}{\mathfrak{m}} 
\newcommand{\Wf}{\mathfrak{w}}
\newcommand{\Hf}{\mathfrak{h}}
\DeclarePairedDelimiter\ceil{\lceil}{\rceil}
\title[Concurrency and Dependency via Distributive Lattices]{The Topological and Logical Structure of Concurrency and Dependency via Distributive Lattices}
\author{Gershom Bazerman}
 \affiliation{%
   \institution{Awake Security}}
 \email{gershomb@gmail.com}
\author{Raymond Puzio}
\affiliation{%
   \institution{Albert Einstein Institute}}
\email{rspuzio@gmail.com}
\begin{abstract}
This paper is motivated by the desire to study package management using the toolkit of the semantics of functional languages. As it transpires, this is deeply related to the semantics of concurrent computation. The models we produce are not solely of theoretical interest, but amenable to analysis and computation. This work makes a number of related contributions. First, it relates the specification of branching dependency structures, which exist in fields from knowledge-representation to package management, to the specification of semantics of concurrent computation. Second, it relates dependency structures to lattices in a precise way, establishing a full correspondence with a particular subclass of lattices. It then makes use of this as a key ingredient, coupled with the underappreciated Bruns-Lakser completion, in relating dependency structures to locales -- objects equipped with both topological and logical properties. It then provides an example of how this interplay of properties can be of use -- using topological properties of the dependency structure to equip internal logics of associated locales with a modality representing contraction relations (i.e. ``versioning''). This approach lets us see linking (or rather, the choice of what to link against, i.e. ``solving'') as an effect. Finally, it discusses how such constructions may relate to important questions in complexity theory, including solutions of satisfiability problems. Along the way, we will see how this approach relates to familiar objects such as package version policies, Merkle-trees, the nix operating system, and distributed version control tooling like git.
\end{abstract}
\begin{document}
\maketitle

\section{Introduction}
This project began with seeking to understand the mathematical structure and logic of software package repositories. Such repositories contain tens of thousands of packages, with complex webs of interlinking dependencies, represented as expressions in propositional logic, containing not only branching choices, but also a notion of ``compatibility ranges'' and a notion of conflict. Rather than develop a language and then study possible semantics for it, we begin from the ``ground up'' by seeking to first model dependency structures in a very general way, and then tease out the logical structure already latent within this setting.

The structures we developed for representing dependencies turned out to be extremely similar to work on the semantics of concurrent computation -- and for good reason! The problem of branching dependency specification is the same as the problem of concurrent computation, just ``turned on its head''. Intuitively, a package repository (such as Hackage, or npm, or even as provided by Debian) may be seen as a specification of a concurrent program, which is ``executed'' by a user, at each step, just picking one more thing to install from the collection of things whose dependencies are already installed. Conversely, it is apparent how one might, given a concurrent program, generate a package repository which corresponds to it.

Semantics of concurrent computation, at base, consist of a collection of states, and certain allowable transitions between them, which may be simultaneous, and which may be nondeterministic. A dependency specification, such as given by a package repository, also has a collection of states (the collection of installed packages), and also has a collection of allowable transitions (one may only install a new package when all dependencies are satisfied). Furthermore, concurrency takes the form of distinguishing when two independent packages may be built simultaneously. And finally, in both cases, we have a notion of ``incompatibility'' -- the former, in terms of contention for a shared resource, and the latter, in terms of e.g. disallowing that two linked packages expose the same required symbol with different definitions.

The difference is then largely in the questions asked about such structures. With concurrent semantics, the whole structure is the ``program'' and the typical questions asked are how such things compose. With dependency structures, a ``program'' is what we think of as a ``build plan'' -- a single trace through the structure to a particular end state, and the questions asked are about optimality, reachability, etc. Our approach is inspired by the latter way of thinking, but we think it sheds light on many related issues as well. In particular, we see that ``inside'' any single concurrent program, we can examine not only its state space, but also a related internal logic -- the logic of dependency specifications. This lends itself to fine-grained intensional analysis, exploiting the interplay. In the course of this, we also see how ``linking'' may be seen as an effect. When a user of Debian runs the command ``apt-get emacs'', this may bring in one of a number of versions of emacs, or perhaps the same version of emacs linked against one or another version of glibc. ``Solutions'' to commands are not unique, and may vary due to the state of installed packages, the state of the upstream repositories, algorithms chosen by a particular version of a solver, etc. Our approach allows us to view the specialization of a command to a particular solution as an effect.

A third way to see DSCs is as models of knowledge representation. Instead of programs or packages one installs, we can imagine studying dependencies among mathematical facts. Certain definitions and theorems are necessary to understand others, and so forth. Again, we have collections of states (things which can be known), allowable transitions (things which provide the basis for learning other things), and potential conflicts (things which you cannot simultaneously believe -- e.g., at least according to some, one cannot be a baysian and a frequentist at the same time, although you can trade off on alternate weekdays).

In section 2, we introduce the basic elements and tools -- dependency structures with choice, and their related ``trace'' structures, reachable dependency posets. Section 3 presents a number of results and constructions in order-theory relevant to reachable dependency posets. First among them is the Bruns-Lakser completion, for which we provide a simple calculation in the finite case that yields a general formal mathematical notion of what a ``Merkle'' structure is. Section 3 also introduces a novel representation theorem, showing that dependency structures with choice correspond to a very particular subclass of lattices -- those which either distributive, or upper semimodular but not modular. It concludes by extending this construction to a class of locales. Section 4 discusses how ``versioning'' information can be thought of as ``covering'' relations in dependency structures, and can give rise to a topological operation on their resultant locales. Section 5 then gives  two intuitionistic logics that can be built over the resultant locales  -- one an immediate ``logic of paths'', and one passing through the free distributive lattice over a poset to yield a ``logic of requirements''. It also discusses how the topological operator interprets in these logics as a modality. Section 6 presents some preliminary investigations on how this approach may shed light on the innate topological structure of dependency problems, with regards to the difficulty they present to solvers. We then conclude with a discussion on related and future work.

\section{Dependency Structures with Choice}
We begin our analysis with an algebraic definition of dependency structures. These are intended to correspond almost immediately to the data provided by package repositories -- events (packages) which may depend on a choice of other events. These structures do not (yet) have any notion of a choice of versions -- so version 1.0 and 1.1 of the same package are logically two entirely different events. Later we will see how to recover this data.

\begin{definition}
A \textbf{Pre-Dependency Structure with Choice} is a pair \((E, D : E \rightarrow \Pc(\Pc(E)))\) where \(E\) is a finite set of events, and \(D\) is a non-nullary mapping from \(E\) to its double powerset, to be interpreted as mapping each event to a set of alternative dependency requirements -- i.e. to a predicate in disjunctive normal form ranging over variables drawn from \(E\).
\end{definition}

\begin{definition}
A \textbf{Dependency Structure with Choice} (DSC) is a pre-DSC with \(D\) satisfying  appropriate conditions of transitive closure and cycle-freeness. We define \(X\) as a \textbf{possible dependency set} of \(e\) if \(X \in D(e)\). We call an event set \(X\) a \textbf{complete event set} if for every element \(e\) there is a possible dependency set \(Y\) of \(e\) such that \(Y \subseteq X\). A pre-DSC is a DSC if every possible dependency set of every element is complete, and no possible dependency set of any element contains the element itself. Pre-DSCs may be completed into DSCs by repeatedly taking transitive completion of possible dependency sets (with regards to each transitive possible dependency set) and then deleting cyclic sets (and elements whose possible dependency set becomes empty under this process) until a fixpoint is reached.
\end{definition}

DSCs are richer than the standard notion of a dependency tree or dependency graph. In such structures, a node \(a\) with edges to \(b\) and \(c\) exhibits a dependency on both \(b\) and \(c\). There is no way, however, to express a dependency on either \(b\) or \(c\). A domain-theoretic account of such structures is given by \textit{pomsets}, introduced by Vaughn Pratt \cite{pratt1986modeling}. Pomsets are a special instance of a broader class of structures, known as \textit{event structures}, introduced by Nielson, Plotkin and Winskel \cite{nielsen1981petri}, and used in the domain-theoretic semantics of concurrent computation and concurrent games. Such structures have not only a (choice-free, transitively normalized) dependency relation, but in addition a conflict structure which indicates incompatible collections of events (typically presented as a collection of consistent sets which carves out only compatible collections of events). Finally, there are so-called \textit{general event structures}, which extend event structures with a notion of choice in roughly the same fashion as DSCs (i.e. by moving from a partial ordering relation into a relation between elements and powersets). Their theory is less well behaved and understood, and its study is an area of ongoing work. DSCs may be characterized as general event structures which are conflict-free. Thus intuitively (i.e. not necessarily formally), in a lattice of expressive power, DSCs sit above pomsets, ``side by side'' with event structures, and below general event structures. One hope of the present is work is that it might be usefully extended in some fashion with conflicts, leading among other things to a further understanding of general event structures.\footnote{When first introduced, what we now know as ``general event structures'' were simply named ``event structures'', and what we now know as ``event structures'' were named ``stable event structures''. We follow the modern convention in this paper.}

\subsection{Reachable Dependencies Posets of a DSC}

Data as given in a DSC is purely declarative. To introduce an analysis of dynamics, we need a structure which we can trace through time. From DSCs we derive a partially ordered set of execution traces, analogous to the family of configurations of an event structure. Order-theoretic terminology used here will be reviewed in the next section.

\begin{definition} The \textbf{reachable dependency poset} of a DSC is the result of an operation, \(rdp\), which sends DSCs to bounded posets (i.e. posets with top and bottom elements) by the following two-step procedure:  We take as elements all collections of events, i.e. \(\Pc(E)\), and impose the least order relation such that one collection of events, \(X\), is above another, \(Y\), if \(Y \subset X\) and for every element of \(X\), there is a possible dependency set contained in \(Y\).  The conditions of transitivity and cycle-freeness ensure that under such an ordering, every event will lie above the empty event set. We refer to event sets that exist in this poset as \textbf{reachable event sets}.
\end{definition}

Informally, a reachable dependency set is generated by asking ``for each event, what are the basic (reachable) event sets which contain it',' and then completing those by the empty set and all unions of this basis. This is sometimes known as ``unwinding''. Viewed as a graph, nodes of a reachable dependency poset correspond to complete event sets, and edges correspond to linear accretion of event sets over time by addition of subsequent events. Consequently, a reachable dependency poset may also be seen as generated by considering all possible dependency sets of all events, augmenting each with the event itself, and then, under the inclusion ordering, augmenting the result with the empty set and in addition all possible joins. We leave it to the reader to convince themselves that both procedures yield the same result.

It follows that for any DSC \((E,D)\), \(rdp((E,D))\)  is a subposet of \(\Pc(E)\), and which has all joins as unions. Further, as a bounded poset with all joins, by the adjoint functor theorem for posets, it also has all  meets, and is hence a lattice. However, importantly, meets do not correspond to intersections. Consider a DSC containing an event \(a\), which depends on either \(b\) or \(c\). \(\{a,b\}\) and  \(\{a,c\}\) are reachable event sets, but their intersection, \(\{a\}\), is not. The meet of two reachable event sets is the greatest reachable event set that is a subset of their intersection. This necessarily exists, and is generated by taking the union of all reachable subsets of the intersection. Note that this reduction operation preserves joins of reachable event sets, but not necessarily joins of \textit{all} event sets. As a syntactic convenience, when we denote operations whose domain is a lattice on a DSC, we implicitly pass through the reachable dependency poset construction.

If we consider DSCs as an algebraic ``signature'', then RDPs provide the models of this signature. As we shall see, the connection between this algebraic structure and its order-theoretic models can be made precise through a representation theorem.

\section{Distributive Lattices and the Idempotent Distributive Lattice Completion}

We review here some basic facts and notation regarding order theory and lattices.

A \textbf{partially ordered set} or poset, \(P\) is a set equipped with a partial order relation \(\le\), which is transitive,  reflexive, and antisymmetric (i.e. for which \( a \le b \band b \le a \implies a = b\)). A (homo)morphism of posets is an monotone (order-preserving) function on their elements, and with such morphisms posets form the category \(Pos\) and finite posets form the subcategory \(FinPos\). Two posets are equivalent when there exist morphisms \(f, g\) between them such that \(f \odot g = id\) and \(g \odot f = id\), i.e. when they are equivalent as objects of \(Pos\). We note that all posets have a standard partial order on them such that \(P \le Q\) when there exists an order-preserving embedding \(P \rightarrow Q\).

A \textbf{lattice}, \(L\) is partially ordered set for which every two elements have a unique greatest lower bound, their \textbf{meet} (\(\wedge\)) and a unique least upper bound, their \textbf{join} (\(\vee\)). The join and meet operations of a lattice are necessarily commutative, associative, and idempotent. A (homo)morphism of distributive lattices is a morphism of posets which also preserves meets and joins. A \textbf{join-semilattice} and \textbf{meet-semilattice} are posets that respectively have all finite joins or all finite meets. A \textbf{complete lattice} is a lattice which has joins and meets of infinitary as well as finitary collections of elements. We write \(\bigwedge\) and \(\bigvee\) for the meet and join operations as applied to an entire set of elements. By abuse of notation, we also may write, e.g., \(x \vee S\) where \(x\) is an element of a poset and \(S\) is a set of elements, to indicate the lifting of application of the unary operation \(x \vee -\) to every element in the set.

A \textbf{distributive lattice}, is a lattice satisfying the additional property that for all \(x, y, z\) in \(L\), \(x \vee (y \wedge z) = (x \vee y) \wedge (x \vee z)\). It is easy to verify that if this condition (join distributing over meet) is satisfied, then the dual condition (meet distributing over join) is also satisfied.  Lattice homomorphisms between distributive lattices are necessarily distributive lattice homomorphisms, and with such morphisms distributive lattices form the category \(DLat\) and finite distributive lattices form the subcategory \(FinDLat\). In \(FinDLat\), all lattices necessarily have a unique top and bottom element (i.e. are bounded). As such, we require morphisms in \(FinDLat\) to also preserve top and bottom elements as the nullary join and meet (i.e. to be homomorphisms of bounded lattices).

A \textbf{join-irreducible} element of a poset is an element \(x\) such that no collection of elements not including \(x\) has \(x\) as its join. The operation \(\Jc(P)\) sends a poset (or a lattice viewed as a poset) to the sub-poset of its join-irreducible elements, sharing the same order relation. An intuition that this lends itself to is that join-irreducible elements are ideals. We refer to elements of a poset which are not join-irreducible as \textbf{composite} elements, and the set of join-irreducible elements which joins to them as their \textbf{basis}. It is important to note that if a poset has a globally least element (i.e. element which stands below all other elements in the order relation), that element is not join-irreducible, since it is the join of the empty set. However, if a poset has more than one locally least element (i.e. element with no element below it), then all such elements are join-irreducible. It is also important to note that even if an element is join-irreducible in \(P\), it still may nonetheless become a join in the restriction to \(\Jc(P)\).

A \textbf{downset} of a poset is a set of elements of the poset which is downwardly-closed -- i.e. for which \(x \in S \band y \le x \implies y \in S\). The operation \(\Oc(P)\) sends a poset to the poset of its downsets, ordered by inclusion. Such a poset has meets and joins as respectively intersection and union, and consequently is a distributive lattice. Further, \(\Oc(P)\) is a morphism (and in fact an embedding) of posets, which sends each \(x \in P\) to the set \(\{y \mathbin{|} y \le x\}\). The dual operation to taking downsets is taking \textbf{up-sets}  which are upwardly-closed. We denote this as \(\Ucc(P)\).

A \textbf{Heyting algebra} is a lattice with a unique top and bottom element, and a special ``implication'' operation called the \textbf{relative pseudo-complement} (\(a \rightarrow b\)) which yields the unique greatest element \(x\) such that \(a \wedge x \le b\). A \textbf{complete Heyting algebra} is a Heyting algebra such that it is also a complete lattice. The category of complete Heyting algebras takes as morphisms monotone functions which preserve finite meets, arbitrary joins, and implication.

A \textbf{frame} is a complete Heyting algebra. However, the category \(Frm\) of frames takes as morphisms monotone functions which preserve finite meets and arbitrary joins, but not necessarily implication. This is to say that the relative pseudo-complement operation derived from finite meets and arbitrary joins necessarily exists in frames, but may not commute with any given frame homomorphism. In the finitary case, distributive lattices and complete Heyting algebras coincide, and hence \(FinFrm = FinDLat\).

A \textbf{locale} is again the same thing as a frame. However, in the category \(Loc\) of locales, morphisms are viewed reversed, and hence \(Loc = Frm^{op}\) and \(FinLoc = FinFrm^{op} = FinDLat^{op}\)

In passing, we will make use of Birkhoff duality, which we recall here as well.

\textbf{Thm. (Birkhoff duality)}: When L is a finite distributive lattice, \(\Oc(\Jc(L)))\) is an equivalence, and for any finite poset P,  \(\Jc(\Oc(P)))\) is an equivalence. Further, this equivalence extends to a functorial equivalence between the categories \(FinPos\) and \(FinDLat\), with monotone functions on posets corresponding to homomorphisms of distributive lattices. As a consequence, all finite locales may be viewed in terms of their underlying posets.

\subsection{Distributive Lattices in Logic and Topology}

Distributive lattices play a very special role in both logic and topology. From a logical standpoint, Heyting algebras provide a complete semantics for intuitionistic logic, with true and false corresponding to top and bottom, and corresponding to meets, or corresponding to joins, and implication being given by the relative pseudo-complement. More precisely, a formula in intuitionistic propositional logic is provable (tautological) if and only if it is valid (yields true under every assignment of variables) in every Heyting algebra (and in fact this result holds even when one considers only finite Heyting algebras). Hence, a finite Heyting algebra (resp. distributive lattice) is a setting where we can directly interpret expressions in intuitionistic logic \cite{van1988troelstra}.

From a topological standpoint, frames are the object of study of locale theory, i.e. so-called ``pointless topology''. From any topological space -- given as a set of points, and a covering relation of open sets -- the open sets themselves form an order theoretic structure which is precisely a frame (which is known, when the homomorphisms between frames are viewed backwards, as a ``locale''). If we then forget the points, and consider only the frame, we can still ``do topology'' -- and from any frame (resp. locale) we can recover a special type of space, known as a sober space. In fact, frames and sober spaces are in one-to-one correspondence \cite{johnstone1982stone, vickers1996topology}.

We will demonstrate that reachable dependency posets correspond one-to-one with a broad class of lattices. This class clearly is broader than distributive lattices. Thus it does not provide a setting in which we can perform logical or topological analysis directly. However, intuitively, it feels like it \textit{should} give such a setting. In particular, events resemble points, and reachable event sets very much resemble open covers. This motivates the construction of the ``best'' way to derive an associated distributive lattice from a reachable dependency poset, and a further correspondence theorem that lets us establish an equivalence between DSCs and a special class of locales.

\subsection{Bruns-Lakser Completion}

In 1970, Bruns and Lakser introduced an indempotent distributive lattice completion for meet-semilattices \cite{bruns1970injective}. Only muchl later \cite{ball2016dedekind}, was it realized that this completion (though expressed in different terms) was actually introduced by  Holbrook MacNeille in the same 1937 work where he first introduced the famed Dedekind-MacNeille completion of partially ordered sets into complete lattices \cite{macneille1937partially}. 

First, we recall the original construction of Bruns-Lakser and MacNeille.

\begin{definition}
(Bruns-Lakser) An \textbf{admissible set} is a subset \(S\) of a meet-semilattice \(P\) in which for all \(x\), \(x \wedge \bigvee(S) = \bigvee(x \wedge S)\).
\end{definition}

\begin{theorem}
(Bruns-Lakser, MacNeille) The partially ordered set of all admissible sets of a meet-semilattice \(P\) is a distributive lattice, \(\BLc(P)\). There exists an injection \(bl : P \rightarrow \BLc(P)\), which preserves all meets and joins of admissible sets. Furthermore, any morphism to a distributive lattice that preserves all meets and joins of admissible sets, \(f : P \rightarrow D\), factors uniquely into the injection \(bl : P \rightarrow \BLc(P)\) followed by a distributive lattice homomorphism \(g : \BLc(P) \rightarrow D\), as in the following diagram:
\begin{equation*}
\begin{tikzcd}
\BLc(P) \arrow[r, "!",dashed]            & D \\
P \arrow[u, hook] \arrow[ru, "f"'] &
\end{tikzcd}
\end{equation*}

\end{theorem}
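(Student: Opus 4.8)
The plan is to realize $\BLc(P)$ concretely as the poset of \emph{admissible downsets} of $P$ --- downsets $I$ such that $\bigvee S \in I$ whenever $S \subseteq I$ is admissible (so that $\bigvee S$ exists in $P$ to begin with) --- ordered by inclusion, with $bl$ the principal-downset map $bl(x) = {\downarrow}x$. Throughout this paper $P$ is finite, which makes all the infinitary bookkeeping below vacuous, but the argument runs in general. First I would clear the structural preliminaries: an arbitrary intersection of admissible downsets is again one, and $P$ itself is an admissible downset, so these sets form a closure system; hence $\BLc(P)$ is a complete lattice with $\bigwedge_i I_i = \bigcap_i I_i$ and $\bigvee_i I_i = \overline{\bigcup_i I_i}$, where $\overline{(-)}$ denotes the least admissible downset containing a given set, built by closing alternately under downward closure and under ``adjoin $\bigvee S$ for each admissible $S$ already present'', iterated to a fixed point.

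The heart of the matter is that $\BLc(P)$ is \emph{distributive}. Since meets are intersections and $J\vee K = \overline{J\cup K}$, and since $\overline{(I\cap J)\cup(I\cap K)} \subseteq I\cap\overline{J\cup K}$ holds automatically, the content is the reverse inclusion $I\cap\overline{J\cup K}\subseteq\overline{(I\cap J)\cup(I\cap K)}$, which I would prove by induction along the construction of $\overline{J\cup K}$. Elements of $J\cup K$ and those produced by downward closure are immediate; the inductive step treats $w = \bigvee S \in I$ with $S$ admissible and contained in an earlier stage. The pivotal lemma is that admissibility is closed under meets: if $S$ is admissible and $x\in P$ then $x\wedge S$ is admissible, by the one-line computation $y\wedge\bigvee(x\wedge S) = (y\wedge x)\wedge\bigvee S = \bigvee((y\wedge x)\wedge S) = \bigvee(y\wedge(x\wedge S))$. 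Taking $x = w$ gives $w = w\wedge\bigvee S = \bigvee(w\wedge S)$ with $w\wedge S$ admissible; each $w\wedge s$ lies in $I$ (a downset containing $w$) and in $\overline{(I\cap J)\cup(I\cap K)}$ (as $w\wedge s \le s$, $s$ is there by the inductive hypothesis, and that set is a downset); hence $\bigvee(w\wedge S) = w$ lies in $\overline{(I\cap J)\cup(I\cap K)}$ because that set is an admissible downset. This is where I expect the real work to sit: the whole theorem hinges on the admissibility condition being exactly strong enough to push an intersection through the closure operator, and a lighter version of this same induction recurs in the universal property.

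Next I would verify the properties of $bl$: it lands in $\BLc(P)$ because a principal downset ${\downarrow}x$ is admissible ($S\subseteq{\downarrow}x$ admissible forces $\bigvee S\le x$); it is injective by antisymmetry; it preserves all existing meets since ${\downarrow}\bigwedge T = \bigcap_{t\in T}{\downarrow}t$; and it preserves joins of admissible sets, since for admissible $S$ with $\bigvee S = v$ one has ${\downarrow}S\subseteq{\downarrow}v$ and ${\downarrow}v$ admissible, whence $\overline{{\downarrow}S}\subseteq{\downarrow}v$, while $v = \bigvee S\in\overline{{\downarrow}S}$ because $S\subseteq\overline{{\downarrow}S}$ and the latter is an admissible downset --- so the $\BLc(P)$-join of $\{{\downarrow}s : s\in S\}$ is $\overline{{\downarrow}S} = {\downarrow}v = bl(v)$.

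Finally, the universal property. Given $f : P\to D$ into a distributive lattice, preserving meets and joins of admissible sets (with $D$ carrying the joins that occur --- automatic when $P$ is finite, and otherwise to be read with $D$ a frame and $g$ a frame homomorphism), put $g(I) = \bigvee_{x\in I} f(x)$. Then $g\circ bl = f$ since $f(x)$ is the top of $\{f(y):y\le x\}$; $g$ preserves meets by distributivity of $D$ together with $f(x)\wedge f(y) = f(x\wedge y)$ and $x\wedge y\in I\cap J$; and $g$ preserves joins of admissible sets (finite joins included) by an induction along $\overline{(-)}$ with the same shape as the distributivity induction, using this time that $f$ sends $\bigvee S$ to $\bigvee_{s\in S} f(s)$ at the inductive step. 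For uniqueness, note that $\bigcup_{x\in I}{\downarrow}x = I$ and $\overline{I} = I$, exhibiting every $I\in\BLc(P)$ as the $\BLc(P)$-join of the principal downsets it contains, and that $\{{\downarrow}x : x\in I\}$ is itself an admissible subset of $\BLc(P)$: for any $J$, $J\cap\overline{\bigcup_{x\in I}{\downarrow}x} = J\cap I = \overline{\bigcup_{x\in I}(J\cap{\downarrow}x)}$, the last equality because $J\cap I$ is an admissible downset. Hence any homomorphism $g'$ with $g'\circ bl = f$ must satisfy $g'(I) = \bigvee_{x\in I} g'({\downarrow}x) = \bigvee_{x\in I} f(x) = g(I)$; and $g$ preserves bottom ($\emptyset\mapsto\bot$) and the top when it exists, which is the claimed factorization.
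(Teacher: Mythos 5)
The paper does not actually prove this theorem: it recalls it from Bruns--Lakser and MacNeille with a citation, and only later records the finite-case computation \(\BLc(P) \cong \Oc(\Jc(P))\). Your proposal is a correct reconstruction of the standard argument behind the cited result: realize \(\BLc(P)\) as the closure system of admissible downsets (the \emph{D-ideals} of Bruns and Lakser), with meets as intersections, joins as closures of unions, and the principal-downset map as the embedding. The load-bearing points are all where they should be --- the stability of admissibility under meeting with a fixed element drives both the distributivity induction and the join-preservation of the extension \(g(I)=\bigvee_{x\in I}f(x)\), and your observation that \(\{{\downarrow}x : x\in I\}\) is an admissible family inside \(\BLc(P)\) is precisely what makes the uniqueness clause meaningful outside the finite case, where the factoring map can only be asked to preserve admissible joins (as the paper itself notes when describing the reflection \(DLat \hookrightarrow DSLat\)). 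One small imprecision in the distributivity step: your parenthetical justifies \(w\wedge s \in \overline{(I\cap J)\cup(I\cap K)}\) by applying the inductive hypothesis to \(s\), but the hypothesis only covers elements of the previous stage that lie in \(I\), and \(s\) itself need not lie in \(I\). The correct appeal is to \(w\wedge s\) directly: it lies in \(I\) because \(I\) is a downset containing \(w\), and it lies in the previous stage because the stages of your closure are downward closed, so the hypothesis applies to it and yields the same conclusion. With that repaired, the argument is complete and, in the finite setting the paper actually uses, specializes consistently to the \(\Oc(\Jc(P))\) description given in the subsequent lemma.
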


The completion is idempotent. Further, it is functorial. \(\BLc\) acts on morphisms that preserve meets and admissible joins (i.e. joins of admissible sets) by lifting their action on all elements in the original source and target, and extending their action on new elements in the source by sending them to the join in the target of the targets of their join-irreducible basis in the source. This functor gives the category of distributive lattices as a reflective category of the subcategory of meet-semilattices which shares all objects but only has morphisms that preserve meets and admissible joins \cite{gehrke2014distributive}. This is to say, given \(i\) as inclusion:

\begin{equation*}
  (\BLc \dashv i) :  DLat \stackrel{\stackrel{\BLc}{\leftarrow}}{\hookrightarrow}
  DSLat \subset SLat
  \,.
\end{equation*}

The collection of admissible sets is rather large and unwieldy. But in the finite case, we have a much nicer characterization of the completion, which is computationally simple and also suggestive and familiar with regards to structures that occur elsewhere in computer science. The following is a simplification and extension of MacNeille's characterization of the finite elements of his completion, which holds even for posets which are not meet-semilattices:

\begin{lemma}
For a finite poset, \(\BLc(P)\) may be constructed as \(\Oc(\Jc(P)))\), with an injection that sends join-irreducible elements to their downsets, and composite elements to the union of their join-irreducible basis.
\end{lemma}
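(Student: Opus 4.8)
The plan is to verify that the pair $\bigl(\Oc(\Jc(P)),\,bl\bigr)$, with $bl$ the map $x \mapsto \{\,j \in \Jc(P) \mid j \le x\,\}$, satisfies exactly the universal property of $\BLc(P)$ recorded in the theorem above; the identification $\BLc(P) \cong \Oc(\Jc(P))$ then follows from uniqueness of the reflection. That $\Oc(\Jc(P))$ is a finite distributive lattice is immediate, the downsets of any poset forming a distributive lattice under inclusion with intersection and union as meet and join (equivalently, $\Oc(\Jc(P))$ is the lattice Birkhoff-dual to the poset $\Jc(P)$). The exhibited map sends a join-irreducible $j$ to its downset inside $\Jc(P)$, and a composite $x$ to the union of the downsets of the members of its basis; these agree with $x \mapsto \{\,j \in \Jc(P) \mid j \le x\,\}$ because any downset of $\Jc(P)$ is already closed downward within $\Jc(P)$, so the basis of $x$ — the set of all join-irreducibles below $x$ — is itself that union. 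Thus the map displayed in the statement is the one under consideration.

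The technical backbone is the \emph{structural fact}, valid for any finite poset, that every element $x$ is the join — which exists — of the set $\{\,j \in \Jc(P) \mid j \le x\,\}$. I would prove this by well-founded induction on $P$: a minimal element is either the global bottom, in which case both sides are the empty join, or it is join-irreducible and the claim is trivial; for composite $x$, write $x = \bigvee S$ with every $s \in S$ strictly below $x$, apply the inductive hypothesis to each $s$, and observe that any upper bound of $\{\,j \in \Jc(P)\mid j \le x\,\}$ dominates each such $s$, hence dominates $x$. From this, $bl$ is at once an order-embedding — it is visibly monotone, and $bl(x) \subseteq bl(y)$ forces $x = \bigvee bl(x) \le y$ — and therefore injective. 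For preservation of existing meets, a join-irreducible lying below both $x$ and $y$ is a lower bound of $\{x,y\}$, hence lies below $x \wedge y$; this gives $bl(x \wedge y) = bl(x) \cap bl(y)$, and the same reasoning covers arbitrary existing meets, in particular meets of admissible sets. For preservation of admissible joins: if $S$ is admissible and $j \in \Jc(P)$ satisfies $j \le \bigvee S$, then $j = j \wedge \bigvee S = \bigvee(j \wedge S)$, so join-irreducibility of $j$ forces $j = j \wedge s$, i.e.\ $j \le s$, for some $s \in S$; hence $bl(\bigvee S) = \bigcup_{s \in S} bl(s)$, which is the join of the $bl(s)$ in $\Oc(\Jc(P))$. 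A small but essential additional point is that the canonical decomposition $x = \bigvee\{\,j \in \Jc(P)\mid j \le x\,\}$ is itself always admissible; I would verify this by decomposing $z \wedge x$ into the join-irreducibles below it and matching those, one by one, against the set $\{\,z \wedge j \mid j \in \Jc(P),\ j \le x\,\}$.

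It remains to establish the universal property. Given a distributive lattice $D$ and a morphism $f : P \to D$ preserving existing meets and admissible joins, set $g : \Oc(\Jc(P)) \to D$, $g(U) = \bigvee_{j \in U} f(j)$. By construction $g$ preserves joins and the bottom; it preserves meets because, by distributivity in $D$, $g(U) \wedge g(V) = \bigvee_{j \in U,\,k \in V} \bigl(f(j) \wedge f(k)\bigr) = \bigvee_{j \in U,\,k \in V} f(j \wedge k)$, and each $f(j \wedge k)$, since $f$ preserves the canonical admissible decomposition of $j \wedge k$, is a join of values $f(\ell)$ with $\ell \in \Jc(P)$, $\ell \le j$ and $\ell \le k$, so $\ell \in U \cap V$ and the right-hand side collapses to $g(U \cap V)$. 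Furthermore $g\bigl(bl(x)\bigr) = \bigvee\{\,f(j)\mid j \in \Jc(P),\ j \le x\,\} = f(x)$, because $f$ preserves the canonical (admissible) decomposition of $x$; and $g$ is the unique such homomorphism, since any distributive-lattice homomorphism $g'$ with $g' \circ bl = f$ satisfies $g'(U) = g'\bigl(\bigcup_{j \in U} bl(j)\bigr) = \bigvee_{j \in U} f(j)$, a downset $U$ of $\Jc(P)$ being equal to $\bigcup_{j \in U} bl(j)$. This gives the universal property and hence the claimed construction. The argument appeals to meets only through meets that genuinely exist in $P$, so with the evident reading of ``admissible'' it applies verbatim to finite posets that are not meet-semilattices, as the statement asserts. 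I expect the principal obstacle to be the verification that $g$ preserves meets, together with its supporting claim that the join-irreducible decompositions are admissible: this is where distributivity of the target and the combinatorics of $\Jc(P)$ genuinely interact, everything else being either formal or a routine induction.
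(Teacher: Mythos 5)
The paper states this lemma without proof, so there is no in-paper argument to measure you against; your strategy of verifying the universal property for the pair \((\Oc(\Jc(P)), bl)\) and invoking uniqueness of the reflection is a legitimate way to supply one. For the case where \(P\) is a meet-semilattice -- the only case in which the quoted universal property of \(\BLc\) is actually defined, and the only case the paper ever uses, since the lemma is applied to reachable dependency posets, which are lattices -- your argument checks out in full: the decomposition \(x = \bigvee\{j \in \Jc(P) \mid j \le x\}\), its admissibility, the embedding and preservation properties of \(bl\), and the construction and uniqueness of \(g\) are all correct.

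The genuine gap is your closing claim that the argument ``applies verbatim'' to finite posets that are not meet-semilattices. The meet-preservation of \(g\) rewrites \(g(U) \wedge g(V)\) as a join of terms \(f(j \wedge k)\) over \(j \in U\), \(k \in V\); these are meets of arbitrary pairs of join-irreducibles and need not exist in a general poset, so that step does not ``appeal only to meets that genuinely exist in \(P\).'' Worse, under the vacuous reading of admissibility the universal property can actually fail. Take \(P = \{a,b,c,d\}\) with \(c\) and \(d\) each below both \(a\) and \(b\) and no other relations. Every element is join-irreducible, the only joins that exist are of sets with a maximum, and the only meets that exist are of comparable pairs, so every monotone \(f\) preserves all existing meets and all admissible joins. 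But \(f\) sending \(a, b\) to \(\top\) and \(c, d\) to \(\bot\) in the two-element lattice admits no lattice-homomorphic extension along \(bl\): join-preservation forces any extension to send \(bl(a) \cap bl(b) = \{c,d\} = bl(c) \cup bl(d)\) to \(\bot\), while meet-preservation forces it to \(\top\). So the extension beyond meet-semilattices needs either a genuinely different (cut-based) notion of admissibility with a correspondingly restricted class of morphisms, or must be read as a claim about the object \(\Oc(\Jc(P))\) alone rather than about the universal property. You should either restrict what you prove to meet-semilattices -- which covers every application in the paper -- or supply and verify the corrected definition for posets.
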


\subsection{Merkle Structures and the Completion of Reachable Dependency Posets}

\begin{figure}
\begin{minipage}[c]{0.3\textwidth}
\begin{equation*}
    \xymatrix{& abc \ar@{-}[dl] \ar@{-}[d] \ar@{-}[dr] & \\
      \underline{ab} \ar@{-}[d] \ar@{--}[dr] & bc \ar@{-}[dl] \ar@{-}[dr] &
        \underline{ac} \ar@{-}[d] \ar@{--}[dl] \\
        \underline{b} & \xcancel{a}  & \underline{c} }         
\end{equation*}
\end{minipage}
\begin{minipage}[c]{0.08\textwidth}
  \begin{equation*}
    \xymatrix{\ar@2{~>}[r]^{\BLc} &}
  \end{equation*}
\end{minipage}
\begin{minipage}[c]{0.3\textwidth}
\begin{equation*}
    \xymatrix{& a_ba_cbc \ar@{-}[dl]  \ar@{-}[dr] & \\
      \mathbf{a_bbc} \ar@{-}[d] \ar@{-}[dr] & &
        \mathbf{a_cbc} \ar@{-}[d] \ar@{-}[dl]\\
        \underline{a_bb} \ar@{-}[d] & bc \ar@{-}[dl] \ar@{-}[dr] &
          \underline{a_cc} \ar@{-}[d] \\
         \underline{b} \ar@{-}[dr] & & \underline{c} \ar@{-}[dl] \\
      & \mathbf{\emptyset} }
\end{equation*}
\end{minipage}
\caption{A simple example of the extended Bruns-Lakser completion, join irreducible elements (and their image) underlined, new elements in bold}
\label{Fig1}
\end{figure}
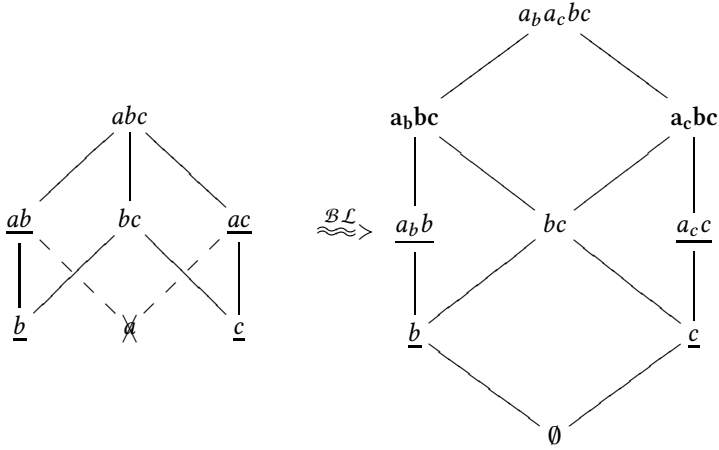


To the vaguely formed question ``how do we make reachable dependency posets topological'' we can now propose a precise answer: application of the idempotent distributive lattice completion. This is to say, we take the composite \(\BLc(rdp(E,D))\) (which, by notational shorthand, may be written as \(\BLc(E,D)\)). The irreducible elements of a reachable dependency poset are those sets which are generated by the possible dependency sets of individual events -- i.e. that have an event which is shared by no complete dependency set below them. In the example where an event (or package, if you prefer) \(a\) depends on either \(b\) or \(c\), the irreducible elements, i.e. \(\Jc(rdp(E,D))\), are \(\{b\}\), \(\{c\}\), \(\{a,b\}\), and \(\{a,c\}\). Hence the application of \(\Oc\) yields the four sets \(\{\{b\}\}\), \(\{\{c\}\}\), \(\{\{b\},\{a,b\}\}\), and \(\{\{c\},\{a,c\}\}\), but also \(\{\}\), \(\{\{b\},\{c\},\{b,c\}\}\) and \(\{\{b\},\{c\},\{a,b\},\{a,c\}\}\) (see Figure \ref{Fig1}).

In the case of finite posets presented as a collection of sets, with ordering given by inclusion (as in the case of reachable dependency posets), we can give a simple algorithmic characterization of this construction. For every event which has multiple ``paths'' to enable it (i.e. multiple possible dependency sets), split the event into new events, each labeled by a different possible dependency set. And since branched events may depend on other branched events, do so recursively. In the resulting structure, rather than sets of events, there are sets of events each labeled by the ``path'' taken to get to them. 

From the standpoint of concurrent semantics, this amounts to replacing sets of events by sets of execution traces. From the standpoint of dependency specification, this amounts to augmenting packages by their ``build plan''. This operation makes intuitive sense in that it provides a more granular and correct specification of what a package ``really'' means. For example library \(a\) may link against library \(b\) or \(c\), each of which provide the same API-surface, but which have subtle differences in behavior. So while \(a\) depends ``equally'' on either \(b\) or \(c\), the resulting products, \(a_b\) and \(a_c\) are not guaranteed to be the same thing. It is precisely this distinction which is captured by taking the idempotent distributive lattice completion.

The subscript ``shorthand'' used above renders each event as subscripted by the path of dependencies to reach it. Since that path itself consists of events, in a nontrivial chain, then those events too are subscripted, and soforth. Representing this computationally seems a bit of a chore. If we took each label and turned it into a hash, and then when taking sets of labels instead took hashes of their hashes, etc, then (with high probability) we could represent the same information in constant space rather than space geometric in the height of our poset.

This operation -- augmenting a structure such that each element contains a hashed description of the path to reach it -- is relatively ubiquitous in modern software. It lies at the heart of the model of patches in distributed version control systems such as git, and is also used in blockchains. In those cases, while meets exist, joins do not. And further, in the blockchain case, it is even simpler because the goal is to restrict ``truth'' to a chosen linear path -- i.e. to avoid branching or ``splits'' in the chain. This construction is also the basic insight of the nix operating-system, as well as the ``nix-like'' store used by the Cabal build system for Haskell. In these cases, meets and joins both are used. The operation of generating the path information and taking a hash-chain of it is known as producing a \textbf{Merkle tree}\cite{merkle1987digital}. The idempotent distributive lattice completion, \(\BLc\) is then, in a sense,  the \textit{non-probabilistic Merkle transformation} of a poset, and provides a formal description of what it means when we take an existing data structure and ``turn it into a Merkle tree''. \footnote{Note that it is non-probabilistic because the completion does not describe the hashing component of a Merkle tree, which is probablistic, and instead captures only the uniform algebraic structure in the case where hashing is unique.} 

We believe this mathematical characterization helps explain the success of the nix paradigm. Through associating packages with their full dependency trace, users (ideally) no longer have to reason about complex dependency chains, but can instead use purely set-theoretic reasoning, freely taking intersections and unions of desired ``atomized'' target packages. There do remain obstructions to ``living the dream'', in the form of incompatibilities, which we hope to address more thoroughly in future work.

It would be remiss of us not to mention, as well, that the operation of sending elements of a poset to their downsets, a key step in this transformation, is a special case of the Yoneda embedding. In particular, it is the Yoneda embedding for (0,1)-presheaves -- i.e. when \(C^{op} \to Set\) is decategorified to \(P^{op} \to Bool\). Categorically-inclined readers may bear this in mind the next time somebody asks them what practical purpose such constructions serve.

\subsection{The General Relationship between DSCs and Lattices}
Given that every DSC induces a finite lattice, it is reasonable to ask the general relation between finite lattices and DSCs. 

First, we consider the lattices induced by dependency structures with no choice -- i.e. where every event has a single possible dependency set. In such a case, the dependency structure presents a poset (and indeed, up to renaming, every poset induces a unique dependency structure). The reachable dependency poset of such a structure is generated by considering down-closed sets of events -- i.e. those where each event can occur only if all its dependencies occur. By Birkhoff duality, this is a distributive lattice, and thus choice-free dependency structures are equivalent to distributive lattices.

When choice appears, we must therefore land in non-distributive lattices. Birkhoff showed that there are two canonical ways in which a lattice can fail to be distributive -- through containing as a sublattice the ``forbidden configurations'' \(M_3\) or \(N_5\). Possibly containing \(N_5\) but not containing \(M_3\) corresponds to the weaker logical property of modularity. As we shall see below, it is impossible for the \(rdp\) construction to generate \(M_3\) as a sublattice. However, in the running example we have used of a nontrivial DSC (where \(a\) depends on \(b\) or \(c\)), the lattice \(S_7\) is generated (Figure \ref{Fig3}). This lattice contains two copies of \(N_5\) as a sublattice (the one generated by excluding \(bc\) and \(c\) and the one generated by excluding \(bc\) and \(b\)). Furthermore, it is the canonical example of a lattice which fails to be modular, but nonetheless satisfies the still weaker upper semimodularity condition. In fact, a lattice that is upper semimodular but not modular must contain as a cover-preserving sublattice \(S_7\)  \cite{stern1999semimodular}. 

\begin{figure}
\begin{minipage}[c]{0.3\textwidth}
\begin{equation*}
    \xymatrix@=1em{& \top \ar@{-}[dl] \ar@{-}[d] \ar@{-}[dr] & \\
      a \ar@{-}[dr] & b \ar@{-}[d] & c \ar@{-}[dl]  \\
       & \bot &}
\end{equation*}
\end{minipage}
\begin{minipage}[c]{0.3\textwidth}
\begin{equation*}
    \xymatrix@=1em{& \top \ar@{-}[ddl] \ar@{-}[dr] & \\
      & & b'  \ar@{-}[d]  \\
      a \ar@{-}[dr] & & b \ar@{-}[dl] \\
      & \bot &}
\end{equation*}
\end{minipage}

\caption{The non-distributive lattice \(M_3\), aka the ``chinese lantern'', and the non-distributive, non-modular lattice \(N_5\), the pentagon.}
\label{Fig2}
\end{figure}
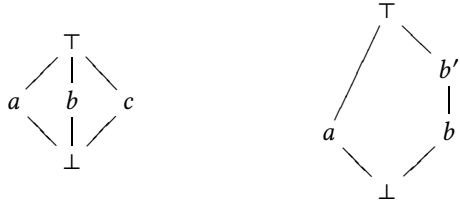

\begin{figure}
\begin{minipage}[c]{0.3\textwidth}
\begin{equation*}
    \xymatrix{& \top \ar@{-}[dl] \ar@{-}[d] \ar@{-}[dr] & \\
      ab \ar@{-}[d]& bc \ar@{-}[dl] \ar@{-}[dr] &
        ac \ar@{-}[d]  \\
        b \ar@{-}[dr]  & & c \ar@{-}[dl] \\
        & \bot &
        } 
\end{equation*}
\end{minipage}
\caption{The lattice generated by the unwinding of a simple dependency structure with choice, where \(a\) may depend on \(b\) or \(c\). This is an instance of the lattice \(S_7\), the ``centered hexagon''.}
\label{Fig3}
\end{figure}
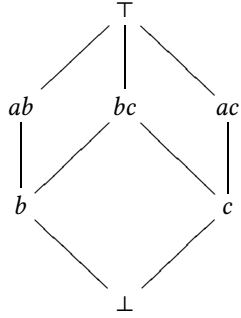

It turns out that when the lattices generated by \(rdp\) fail to be distributive, they do so precisely by containing copies of \(S_7\) -- i.e. they are upper semimodular, but not modular. That is to say that the generated lattices fail to be distributive \textit{only} by also failing to be modular, but do not fail so badly that they cease to be upper semimodular. Not only does \(rdp\) land precisely in this class of lattices, but it lands so precisely that we can construct an inverse, and a full equivalence.

\begin{theorem}
\label{representation}
There exists a map \textbf{uds} (underlying dependency structure) from upper semimodular lattices without \(M_3\) as a sublattice to DSCs such that  \(rdp \circ uds\) is the identity, and \(uds \circ rdp\) is an equivalence up to renaming.
\end{theorem}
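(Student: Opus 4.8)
The plan is to read this as a representation theorem refining Birkhoff duality. Birkhoff identifies finite distributive lattices with finite posets via $\Oc\circ\Jc$; here the target class is larger, and the extra data --- which join-irreducibles are ``the same event installed against different dependencies'' --- will be recovered from the congruence geometry of the lattice. I would proceed in three stages: (i) pin down the image of $rdp$ as exactly the finite upper semimodular lattices with no $M_3$ sublattice; (ii) construct $uds$ on such a lattice; (iii) verify $rdp\circ uds=\mathrm{id}$ and that $uds\circ rdp$ returns the original DSC up to renaming.

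For stage (i), observe first that $rdp(E,D)$, ordered by inclusion of reachable event sets, is graded with rank $|X|$: if $X\lessdot Y$ then, building $Y$ from $X$, the last step adjoins one event $z\notin X$ and $Y\setminus\{z\}$ is a reachable set with $X\subseteq Y\setminus\{z\}\subsetneq Y$, forcing $Y\setminus\{z\}=X$, i.e.\ $|Y|=|X|+1$. Since joins are unions and $X\wedge Y$ is the largest reachable subset of $X\cap Y$, an elementary cardinality count shows $X\wedge Y\lessdot X$ forces $X\cap Y=X\wedge Y$ and hence $Y\lessdot X\vee Y$, which is upper semimodularity. For the absence of $M_3$: if reachable sets $a,b,c$ formed an $M_3$ with meet $s$ and join $t$, then $s\subsetneq a\subseteq t=b\cup c$, so building $s$ first and then adjoining the elements of $a\setminus s$ one at a time, the first element $x_0$ adjoined has a dependency set contained in $s$; but $x_0\in b$ or $x_0\in c$, so $s\cup\{x_0\}$ is a reachable set strictly between $s$ and $a\wedge b$ (respectively $a\wedge c$), contradicting $s=a\wedge b=a\wedge c$. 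Surjectivity of $rdp$ onto this class then comes for free once stages (ii)--(iii) are in place.

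For stage (ii), set $E:=\Jc(L)/{\sim}$, where $j_1\sim j_2$ iff the prime quotients $[{j_1}_*,j_1]$ and $[{j_2}_*,j_2]$ generate the same join-irreducible congruence of $L$ (equivalently, for an upper semimodular lattice, are projective), with $\pi\colon\Jc(L)\twoheadrightarrow E$ the quotient map. For $j\in\Jc(L)$ put $\mathrm{dep}(j):=\pi\bigl(\{k\in\Jc(L):k\le j_*\}\bigr)$ and define $D(\pi(j)):=\{\mathrm{dep}(j'):\pi(j')=\pi(j)\}$. Transitive closure of $D$ is immediate, since $k\le j_*$ gives $k_*\le j_*$ and hence $\mathrm{dep}(k)\subseteq\mathrm{dep}(j)$; cycle-freeness amounts to the $\sim$-classes being antichains in $\Jc(L)$, which is a first use of the $M_3$-free hypothesis. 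To see $rdp(uds(L))\cong L$, use the order map $\Phi(x):=\pi\bigl(\{j\in\Jc(L):j\le x\}\bigr)$, with candidate inverse $\Psi(X):=\bigvee\{\,j\in\Jc(L):\mathrm{dep}(j)\cup\{\pi(j)\}\subseteq X\,\}$. For the other composite, one identifies $\Jc(rdp(E,D))$ with the ``atomised events'' $Y\cup\{e\}$ for $e\in E$ and $Y\in D(e)$ minimal; the DSC $uds(rdp(E,D))$ then recovers $(E,D)$ after renaming events back and discarding redundant (non-minimal) dependency sets, which is why the statement claims only ``equivalence up to renaming'' rather than strict equality.

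The hard part will be showing that $\Phi$ and $\Psi$ are mutually inverse. This reduces to two genuinely order-theoretic lemmas about a finite upper semimodular $M_3$-free lattice $L$: that $\mathrm{dep}(j)\cup\{\pi(j)\}\subseteq\Phi(x)$ implies $j\le x$ for join-irreducible $j$, and that $\Phi$ is compatible with joins, $\pi\bigl(\{j:j\le\bigvee_i k_i\}\bigr)=\bigcup_i\pi\bigl(\{j:j\le k_i\}\bigr)$. Both amount to the assertion that two distinct ``versions'' $j\sim j'$ of one event cannot lie below a common element in a way that corrupts the reconstruction; $M_3$ is precisely the configuration that breaks this (indeed $uds(M_3)$ is a one-event DSC whose $rdp$ is the two-element chain, not $M_3$), while upper semimodularity is what makes the projectivity class of a prime quotient behave like a consistent ``which event was installed'' labelling of the covers. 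I expect proving that this congruence-theoretic $\sim$ coincides with the naive combinatorial ``same event'' relation, and that with that identification the reconstruction is faithful, to be the technical core of the argument.
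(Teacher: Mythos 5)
Your stage (i) is sound and is essentially the paper's own argument made more careful: covers in \(rdp(E,D)\) add exactly one event, which yields upper semimodularity via Birkhoff's covering criterion, and your adjoin-one-element-at-a-time argument for excluding \(M_3\) is a legitimate (arguably cleaner) variant of the paper's observation that three unordered reachable sets sharing a meet must have pairwise distinct unions. Where you diverge is the construction of \(uds\): the paper identifies join-irreducibles \(a,b\) exactly when there exists \(x < a\vee b\) with \(x\vee a = x\vee b = a\vee b\) (its set \(Q(L)\)), a purely join-theoretic condition read directly off the embedded copies of \(S_7\), whereas you identify \(j_1\sim j_2\) when their prime quotients generate the same join-irreducible congruence. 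The two agree on \(S_7\) itself, but your route incurs a new obligation --- showing that projectivity of prime quotients in a finite upper semimodular \(M_3\)-free lattice coincides with the intended combinatorial ``same event'' relation --- which the paper's more elementary condition sidesteps.

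The genuine gap is that the theorem's actual content, \(rdp\circ uds=\mathrm{id}\), is never established. You define \(\Phi\) and \(\Psi\), correctly isolate the two order-theoretic lemmas on which their mutual inverseness depends, and then state that you ``expect'' proving them to be the technical core; that core is precisely what the paper's proof (however informally) supplies, via its analysis of how the ``unifying'' element \(x\) witnessing a pair in \(Q(L)\) is necessarily join-reducible, is discarded by \(uds\), and is rebuilt by \(rdp\) while the quotiented event is re-exploded into its two copies. Without that step nothing rules out, for instance, two join-irreducibles being identified when they should not be (which would make \(rdp(uds(L))\) strictly smaller than \(L\)), or the reconstructed lattice acquiring covers that \(L\) lacks. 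A secondary unproven assertion of the same kind: you claim cycle-freeness of \(uds(L)\) reduces to the \(\sim\)-classes being antichains and that this follows from \(M_3\)-freeness, but give no argument that a projectivity class cannot contain comparable join-irreducibles in this lattice class. As it stands the proposal is a well-organized plan with a complete first third, not a proof.
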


\begin{proof}
First, we show that \(rdp\) only generates upper semimodular lattices without \(M_3\) as a sublattice. Next, we construct an inverse, and show that on such lattices, it is the identity.

We know the generated lattice is upper semimodular by following Birkhoff's condition for weak semimodularity (which corresponds to semimodularity in the finite case). An element is defined to cover another element if it is greater than that element, and there is no intervening element between them. A finite lattice is semimodular iff for each pair of elements which are both covered by their join, they both cover their meet. In the case of the \(rdp\) construction, an event set covers another event set precisely when it differs by adding a single event. So for two elements to be covered by their join, they must each share all events save one. Necessarily, then, they will cover their meet, which is their intersection.

Next, we demonstrate that \(rdp\) must generate lattices that are free of \(M_3\). If a lattice has \(M_3\) as a sublattice, then it must contain three elements that are relatively unordered and for which any two elements will join to the join of all three. Further, for these three elements, their meets must fulfill the dual condition. Now, assume three elements have a common meet. Then, they must each differ by distinct events.  Their joins are freely generated by union, and thus they must have three different joins, and therefore the desired construction is impossible.

Now, we construct \(uds(L)\) in two steps, first establishing the ``free'' DSC \(FD\) on the lattice \(L\), and then identifying events which are the ``same''. \(FD(L)\) has an event set consisting of the join-irreducible elements of \(L\), and as dependencies singleton sets freely generated by the ordering relation on the elements.  I.e. \(FD(L) = (\Jc(L), x \mapsto \{\{ y \in \Jc(L) \mid y<x\}\})\). To identify two events is to substitute all references for first to the second, and further, to define the new collection of possible dependency sets of the first to be the union of the existing collections given by both events. We denote a DSC \(D\) with two events, \(a\) and \(b\) identified as  \(D/[a\sim b]\). The set \(Q(L)\) consists of pairs of join-irreducible elements \((a,b)\) such that  \(\exists\, x \in L . \,  x < a \vee b \band x \vee a = x \vee b = a \vee b\). (Note here that \(x\) need not be join-irreducible). Now, \(uds(L) = FD(L)/[a \sim b,a' \sim b',...]\) for all \((a,b), (a',b'), ...\in Q(L)\).

Finally, we show that \(rdp \circ uds\) is the identity. When the source lattice is distributive, then the quotient operation trivializes (in a distributive lattice, joins are necessarily unique, and hence \(Q(L)\) is empty). Therefore, by Birkhoff duality, we have the identity. When the source lattice fails to be distributive, we know that it does not have \(M_3\) as a sublattice, and it is upper semimodular. As discussed above, a lattice that is upper semimodular but not modular must contain as a cover-preserving sublattice \(S_7\), precisely our running example. An instance of \(S_7\) then results in an element of Q(L). For example, in Fig. \ref{Fig3},  we have the pair \((ab,ac)\), and there exists an \(x\) given by \(bc\) satisfying the desired condition (i.e. it is less than the join of the pair, and joining it with any element of the pair is equal to the join of the pair). Given a pair in \(Q(L)\), considered as two event sets, the existence of a third ``unifying'' element (necessarily unordered with regards to them) could come about in two ways. Firstly, it could be a join irreducible element. But that would induce as a sublattice \(M_3\), because all three elements would share the same join, and furthermore they must share the the same meet (otherwise that element would be reducible as the join of the distinct meets). Therefore the element is necessarily join reducible, and since it must be unordered with regards to the pair, it must be the join of two elements below the pair. Now, in the \(rdp \circ uds\) roundtrip, this ``unifying'' element will be discarded by \(uds\) because it is join reducible, and reconstructed by \(rdp\). Furthermore, the two ``unified'' elements will be quotiented into a single event with two possible dependency sets in one direction, and then ``exploded'' back into two distinct elements in the return trip. The rdp construction guarantees their structure below is preserved by such a roundtrip, and the fact that the two events were identified ensures that no additional structure above was created. This completes the proof.
\end{proof}

\subsection{DSCs and Finite Locales}

We have seen how a DSC may be transformed into a finite locale via the extended Bruns-Lakser completion of its reachable dependency poset. It is natural to ask how far this relationship extends. It is almost immediate that any finite locale may be generated by a DSC. All that is necessary is to consider the join-irreducible elements of the frame as events, each of whose single possible dependency set is all other events below it in the frame ordering. However, even up to renaming of events, distinct DSCs can nonetheless present order-equivalent frames. For example, the DSC with \(a\) depending on \(b\) or \(c\) gives the same frame as the DSC with \(a\) depending on \(b\) and \(d\) depending on \(c\) (since the Bruns-Lakser completion ``splits'' the former \(a\) into two copies to begin with). 

This tension (multiple presentations of the same structure) occurs very frequently in topos theory (and its decategorification in locale theory) when multiple sites (categories with a topology, resp. posets with a coverage) can present the same topos (resp. locale). This suggests that a tighter correspondence should not be to locales directly, but rather to their presentations as sites (known, for locales, as ``posites''). A good overview of posites is provided by \cite{schultz2017temporal}.

The central tool in this section is the nucleus, which is the localic analogue of a topology. We recall its definition and some basic facts regarding it. Readers are referred to \cite{johnstone1982stone, vickers1996topology} for further discussion.

A \textbf{nucleus} is the algebraic structure on a frame that gives a sublocale. It given as a monotone function on a frame \(j : L \rightarrow L\), satisfying three properties. First: \(j(a \wedge b) = j(a) \wedge j(b)\). Second, \(a \le j(a)\). Finally, \(j(ja) \le j(a)\). These may be summarized as saying that it is (finite) meet-preserving, contractive (in this case, inflationary) and idempotent. An element of \(L\) is said to be \(j\)-closed if \(j(a)=a\). Further, if \(L\) is a frame, then \(L/j\), which consists only of \(j\)-closed elements, is also a frame, and there exists a surjective frame homomorphism \(j^* : L \rightarrow L/j\). If we view a frame as a category, a nucleus is just a left-exact idempotent monad; if we view a frame as a decategorified topos, a nucleus is a topology; and if we view a frame as generated by its internal logic, a nucleus gives a ``possibility'' or ``locally true'' modal operator on that logic, analogous to the S4 diamond.

A site of a topos is typically given as a category and an associated Grothendieck topology. But such topologies are somewhat painful to manipulate and reason about. In the special case where we are concerned only with the ``(0,1)-sheaves'' (i.e. truth-valued sheaves, i.e. downsets) of a poset, we can equally well just work with a site as a poset and a nucleus on its downsets.

\begin{definition}
The \textbf{Bruns-Lakser topology} on a finite locale is a function generated by the following stepwise procedure. First we consider the join-irreducible elements of the underlying poset of the locale (i.e. the poset of its join-irreducible elements). We term such elements the ``double basis'' of the locale (which consists of some, but not necessarily all, join-irreducible elements of the locale itself). For double-basis elements, we define \(bl\) as the identity. For all joins of such elements, \(bl\) is again the identity. And for all meets of all elements thus far enumerated, \(bl\) is again the identity. For all other elements, \(bl\) is the meet of all idempotent elements greater than or equal to it. Since we have explicitly added all necessary meets, the result is necessarily an idempotent.
\end{definition}

\begin{lemma}
The poset endomorphism, \(bl\), generated by the join-meet completion of the join-irreducible elements of the basis of a distributive lattice is a nucleus.
\end{lemma}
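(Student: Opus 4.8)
The plan is to verify directly the three defining properties of a nucleus for $bl$: that it is inflationary, idempotent, and preserves finite meets. The observation that organizes everything is the following. Write $C \subseteq L$ for the set of elements on which the stepwise procedure declares $bl$ to be the identity, i.e. the double basis $\Jc(\Jc(L))$, together with all joins of double-basis elements, together with all meets of the elements enumerated so far. By construction $C$ is closed under finite meets (the last step closes it, and in the finite setting there is nothing further), it contains $\top$ as the empty meet, and for every $x \in L$ we have $bl(x) = \bigwedge\{\, c \in C : x \le c \,\}$. Thus $bl$ is precisely the closure operator associated with the Moore family $C$. Closure operators are automatically monotone, inflationary, and idempotent, and their set of fixed points is exactly the Moore family that defines them; so inflationarity ($x \le bl(x)$) and idempotency ($bl(bl(x)) = bl(x)$, since $bl(x) \in C$) come for free, as does $bl(\top) = \top$. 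Monotonicity also yields one half of meet-preservation with no further work: from $a \wedge b \le a$ and $a \wedge b \le b$ we get $bl(a \wedge b) \le bl(a) \wedge bl(b)$.

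Hence the entire content of the lemma is the reverse inequality $bl(a) \wedge bl(b) \le bl(a \wedge b)$. Since $bl(a), bl(b) \in C$ and $C$ is meet-closed, $bl(a) \wedge bl(b)$ is itself an element of $C$ lying above $a \wedge b$, so this inequality is equivalent to the statement that $bl(a) \wedge bl(b)$ is the \emph{least} element of $C$ above $a \wedge b$ --- equivalently, that every $c \in C$ with $a \wedge b \le c$ already satisfies $bl(a) \wedge bl(b) \le c$. Using the relative pseudocomplement $\to$ of the Heyting algebra $L$, this reduces further to a single clean claim: that $C$ is closed under the operations $x \mapsto (x \to c)$ for all $x \in L$ and $c \in C$; granting that, $C$ is a sublocale of $L$ and its associated closure operator is a nucleus by the standard correspondence. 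I would pursue the argument in this form, since it isolates exactly one thing to check and makes the role of distributivity explicit.

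Proving that the join-meet hull of the double basis is closed under relative pseudocomplement is the step I expect to be the main obstacle, and it is where both distributivity and the precise way the double basis is cut out of $L$ must be used. Here the plan is: (i) recall from Birkhoff duality that, $L$ being a finite distributive lattice, every element of $L$ is the join of the join-irreducibles of $L$ below it; (ii) apply the definition of join-irreducibility a second time to describe how an arbitrary join-irreducible of $L$ decomposes relative to the double basis, so that $a$, $b$, and $x \to c$ can all be presented as joins of meets of double-basis elements; and (iii) use the distributive law to commute the meet in, e.g., $a \wedge bl(b)$ past these joins and conclude that the resulting element already lies in $C$. I would first run this bookkeeping on the concrete examples of Figures 1--3, where the path/subscript notation makes $C$ and its generators completely explicit, in order to pin down the combinatorics, and then transcribe the pattern into the general argument; the delicate point to watch is that $C$ is only guaranteed to be closed under meets, not joins, so the distributive law must be applied in the direction that rewrites a meet of joins as a join of meets.
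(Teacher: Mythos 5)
Your setup is correct and, in fact, sharper than it needs to be: reading \(bl\) as the closure operator of the Moore family \(C\) (the meet-closure of the join-closure of the double basis) does give inflationarity, idempotence, monotonicity, and the inequality \(bl(a \wedge b) \le bl(a) \wedge bl(b)\) for free, and you have correctly located the entire content of the lemma in the reverse inequality \(bl(a) \wedge bl(b) \le bl(a \wedge b)\). The problem is that you never prove it. Everything after ``I would pursue the argument in this form'' is a plan, not an argument: the claim that \(C\) is closed under \(x \mapsto (x \to c)\) is stated as the thing to be checked, the Birkhoff-duality bookkeeping is described but not carried out, and you explicitly defer the combinatorics to experimentation on Figures 1--3. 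This is precisely the step that cannot be waved at, because the statement is false for a generic Moore family: take \(L = \Pc(\{1,2,3\})\) and \(C = \{\top, \{1\}, \{2\}, \emptyset\}\); then \(bl(\{1\}) \wedge bl(\{2,3\}) = \{1\} \wedge \top = \{1\}\), while \(bl(\{1\} \wedge \{2,3\}) = bl(\emptyset) = \emptyset\). So some specific feature of the double-basis construction --- that the fixed points are generated by \emph{joins} of elements of \(\Jc(\Jc(L))\) before meets are taken, in a distributive ambient lattice --- must actually be invoked, and your proposal never invokes it.

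For comparison, the paper's own proof is much shorter and takes the direct route: idempotence and contractivity are dismissed as holding by construction, and meet preservation is argued by saying that a meet of fixed points is fixed by construction, and that any other meet \(bl(x) \wedge bl(y)\) can be rewritten as a meet of idempotents and is therefore also fixed. Note that this establishes that \(bl(x) \wedge bl(y)\) lies in \(C\) --- which you also establish --- and is equally brief about why it is the \emph{least} element of \(C\) above \(x \wedge y\). So you have correctly diagnosed where the real work lives, and your sublocale reformulation is a reasonable and arguably cleaner frame for it than the paper's; but a diagnosis plus a to-do list is not a proof, and the lemma remains unproved in your write-up.
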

\begin{proof}
Since \(bl\) is idempotent by construction and obviously contractive, we only need show it preserves meets; i.e. that that \(bl(x) \wedge bl(y) = bl(x  \wedge  y)\).  If the meet is a meet of idempotents, then it is preserved by construction. If it is not, then it is the meet of elements which themselves are the meets of idempotents. As such, it may be rewritten to be a meet of idempotents, and thus is also preserved.
\end{proof}


\begin{lemma}
The Bruns-Lakser topology is the least topology which preserves the join-irreducible elements of the underlying poset of a finite frame.
\end{lemma}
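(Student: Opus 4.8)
The plan is to realise the Bruns--Lakser topology $bl$ as the \emph{coarsest} nucleus whose closed elements form the bounded sublattice of the frame generated by the double basis, and then to read off minimality from the standard ordering of nuclei. First I would fix conventions: by Birkhoff duality a finite frame $L$ is $\Oc(P)$ with $P=\Jc(L)$ its underlying poset, and the double basis is $B=\Jc(P)$, a (possibly proper) subposet of $\Jc(L)$. I would recall that nuclei on $L$ are ordered pointwise, that $j_1\le j_2$ iff every $j_2$-closed element is $j_1$-closed (so a pointwise-smaller nucleus has \emph{more} closed elements), and that $j(a)=\bigwedge\{\,c: j(c)=c,\ c\ge a\,\}$ for any nucleus $j$; consequently, among nuclei whose set of closed elements is constrained to lie inside a fixed sublocale $S$, the pointwise-least is the one whose closed elements are exactly $S$. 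The property at issue I would state thus: a topology $j$ \emph{preserves the join-irreducible elements of the underlying poset} iff every $p\in B$ is $j$-closed and, conversely, every $j$-closed element is a meet of joins -- formed in $L$ -- of double-basis elements; equivalently, writing $\langle B\rangle$ for the bounded sublattice of $L$ generated by $B$, the frame $L/j$ of closed elements satisfies $B\subseteq L/j\subseteq\langle B\rangle$. It is this two-sided condition, not the weaker ``all join-irreducibles of $L/j$ lie over $B$'', that is wanted: the weaker reading admits nuclei incomparable with $bl$ whose quotient is only abstractly isomorphic to $\BLc(P)$ -- for instance the nucleus obtained by restricting downsets along the inclusion $B\hookrightarrow P$, which in general does not even fix $\bot$.

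Next I would verify that $bl$ has the property and is the least such. By the construction of the Bruns--Lakser topology its closed elements are precisely the double-basis elements, their joins, and the meets of the latter; distributivity shows this set is already closed under binary joins, so it equals $\langle B\rangle$. Hence $B\subseteq L/bl=\langle B\rangle$ and every $bl$-closed element is a meet of joins of double-basis elements, so $bl$ preserves the join-irreducible elements of the underlying poset; that $\langle B\rangle$ is genuinely a sublocale is exactly the content of the preceding lemma (that $bl$ is a nucleus). For minimality, let $j$ be any topology with the property. Then its closed elements all lie in $\langle B\rangle$, so $L/j\subseteq\langle B\rangle=L/bl$, which by the ordering of nuclei is precisely $bl\le j$. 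Therefore $bl$ is the least topology preserving the join-irreducible elements of the underlying poset.

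The step I expect to be the genuine obstacle is not the lattice arithmetic but pinning down ``preserves the join-irreducible elements of the underlying poset'' as the two-sided condition $B\subseteq L/j\subseteq\langle B\rangle$ and checking that this faithfully captures the intended notion -- in particular, insisting that it is the span of $B$ under joins and meets \emph{computed in $L$} that must contain every $j$-closed element, since under the naive reading one finds competing nuclei not comparable with $bl$. The one substantive analytic input -- that $\langle B\rangle$ is closed under the relative pseudo-complement, hence is a sublocale, so that there is a largest admissible frame $L/j$, namely $\langle B\rangle$ itself -- is supplied by the earlier lemma; granting it, the collapse of the remaining, non-double-basis join-irreducibles of $P$ is automatic, since by construction nothing outside $\langle B\rangle$ is $bl$-closed.
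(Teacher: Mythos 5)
Your proof is workable on its own terms but takes a genuinely different route from the paper's, and the difference is worth spelling out. The paper reads ``preserves the join-irreducible elements'' naively ($B \subseteq L/j$ with the elements of $B$ still join-irreducible there), and then spends all of its effort on a structural fact: the meets adjoined in the last stage of the construction of $bl$ create no new join-irreducibles, because any such meet $z = (a \vee b) \wedge (c \vee d)$ rewrites by distributivity into $\bigvee\{a\wedge c, a\wedge d, b\wedge c, b\wedge d\}$ and hence, inductively, into a join of double-basis elements. That is what identifies $\Jc(\Oc(P)/bl)$ with $\Jc(P)$, hence $\Oc(P)/bl$ with $\BLc(P) = \Oc(\Jc(P))$, and it is in this Birkhoff sense --- the minimal distributive lattice realizing the double basis as its join-irreducibles --- that the paper means ``least.'' You instead relocate all the work into the hypothesis: by strengthening ``preserves'' to the two-sided condition $B \subseteq L/j \subseteq \langle B\rangle$, your minimality step ($L/j \subseteq L/bl$ implies $bl \le j$ in the pointwise nucleus order) follows almost definitionally from the standard correspondence between nuclei and their sets of closed elements, and only the upper-bound clause of your definition is ever used. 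What your reading buys is precision: you are right that under the naive reading the identity nucleus is pointwise below $bl$ and trivially preserves $B$, so ``least'' in the nucleus order cannot be meant naively, and your diagnosis of this ambiguity is sharper than anything in the paper. What it costs is the paper's actual content: you verify $L/bl = \langle B\rangle$ but never show that $\langle B\rangle$ has no join-irreducibles beyond $B$, which is the fact the paper needs downstream (it is what lets the posite reconstruction recover $\BLc(P)$, and hence the DSC, from the topology). If you keep your framing, you should still prove the DNF/CNF decomposition step, since without it the identification of $L/bl$ with the Bruns-Lakser completion --- rather than merely with some sublattice generated by $B$ --- is not established.
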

\begin{proof}
We consider a finite frame as given by its underlying poset, i.e. as \(\Oc(P)\). Clearly \(\Oc(P)/bl\) contains elements corresponding to all join-irreducible elements in \(P\), and they remain join-irreducible as no elements are added below them. Further, as it is a frame, it is a distributive lattice. It remains to show that it is the least such lattice. The only idempotents which could introduce new join-irreducible elements in the induced basis are the meets of existing elements, of the form  \(z = x \wedge y\). Then \(z = (a \vee b) \wedge (c \vee d)\) for some four elements (possibly overlapping) which are themselves join-irreducible or joins of join-irreducible elements.  By the laws of distributive lattices we may rewrite between disjunctive and conjunctive normal forms and so \(z = \bigvee \{a \wedge c, a \wedge d, b \wedge c, b \wedge d\}\).  But now it is a join of smaller meets. Inductively, either these meets are join-irreducible themselves, or they may be further iteratively decomposed until they are, and therefore the original element is a join of irreducible elements.
\end{proof}

We can now state the central theorem of this section:

\begin{theorem}
The set of DSCs is one-to-one with the set of finite posites generated by downsets of upper semimodular lattices without \(M_3\), and equipped with the Bruns-Lakser topology.
\end{theorem}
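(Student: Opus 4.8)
The plan is to deduce the statement from Theorem~\ref{representation} and the lemmas just established on the Bruns--Lakser topology, so that essentially nothing new has to be proved. Fix a DSC $(E,D)$ and put $L = rdp(E,D)$. The first half of the proof of Theorem~\ref{representation} already shows that $L$ is always a finite upper semimodular lattice with no $M_3$ sublattice, and, with the construction of $uds$, that $rdp$ and $uds$ are mutually inverse --- up to renaming of events and isomorphism of lattices --- between DSCs and exactly this class of lattices. So I would first reduce the claim to exhibiting a bijection between this class of lattices and the stated class of posites.

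To such a lattice $L$ I would attach the posite carried by the frame $\Oc(L)$ of its downsets --- equivalently, the poset $L \cong \Jc(\Oc(L))$ together with a coverage on it --- equipped with the Bruns--Lakser topology $bl_L$ of this finite locale. By the lemma that $bl$ is a nucleus this is a legitimate posite, and by the lemma characterising $bl_L$ as the least topology that preserves the join-irreducibles of the underlying poset, the finite locale it presents has the same join-irreducibles as $\BLc(L) = \Oc(\Jc(L))$, so that --- up to inessential detail --- the posite is a canonical presentation of the finite locale already associated to the DSC, not an ad hoc gadget. Well-definedness of the assignment is immediate, since $bl_L$ depends only on $L$, and surjectivity onto the stated class of posites is immediate from the way that class is described.

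Injectivity is the only point that needs an argument, and Birkhoff duality supplies it at once: from such a posite one reads off its frame $\Oc(L)$, hence $L$ itself as the poset $\Jc(\Oc(L))$ of join-irreducibles of that frame --- an order-isomorphism which, $L$ being a lattice, is a lattice isomorphism --- hence, through $uds$ and Theorem~\ref{representation}, the DSC up to renaming. Composing, I would conclude that $(E,D) \mapsto \bigl(\Oc(rdp(E,D)),\, bl_{rdp(E,D)}\bigr)$ is a bijection from DSCs modulo renaming onto finite posites of the stated form modulo isomorphism, with inverse sending such a posite to $uds$ of the poset of join-irreducibles of its frame; this is the asserted one-to-one correspondence.

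The hard part is not a difficult lemma but getting the bookkeeping right: fixing precisely what ``the posite generated by downsets of $L$, equipped with the Bruns--Lakser topology'' is to denote, and then checking that this description is compatible with the two phenomena that motivated passing from locales to posites --- that non-isomorphic lattices in the class can present the same locale yet never the same posite (because the poset component already recovers $L$), and that conversely every posite of the stated form really does arise from a DSC, which is exactly the content of the statement that $rdp$ surjects onto the upper semimodular lattices without $M_3$. All the genuinely substantive work has already been isolated into Theorem~\ref{representation} and the nucleus lemmas, and I would present this theorem as their corollary.
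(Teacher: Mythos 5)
Your proposal is correct and follows essentially the same route as the paper's own proof: delegate the substance to Theorem~\ref{representation}, use Birkhoff duality to pass between the lattices and their downset frames, and invoke the nucleus lemmas to equip the result with the Bruns--Lakser topology. Your version merely spells out the well-definedness and injectivity bookkeeping that the paper leaves implicit.
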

\begin{proof}
The bulk of the work is handled by \ref{representation}, which establishes the connection between DSCs and upper semimodular lattices without \(M_3\). The correspondence between these and their downsets is established by Birkhoff duality, and the construction of a locale by Bruns-Lakser by the above lemma.
\end{proof}

\section{Version Parameterizations of Dependency Structures and Nuclei}

The mapping from an actual repository to a DSC is incomplete in that it throws away information about two packages being different ``versions'' of the same thing. Here we introduce a data structure that captures that additional information. A higher version of a package is, intuitively, something that is ``almost the same, but better''. Similarly, in a DSC, some events are in a sense ``universally more powerful'' than other events, in that they enable at least all the things enabled by the events they cover. We say an event is a ``higher version'' of another event if for possible dependency set of every event containing the lower event, there is also a possible dependency set for that event which differs only in that it contains the higher event instead.

\begin{definition}
A \textbf{version parameterization} of a DSC is an idempotent endofunction on events, from lower to higher, satisfying the above criteria; i.e. for every possible dependency set of every event, there is another possible dependency set of that event where the lower versions have been substituted for higher versions. Idempotency here translates into the condition that no higher event is itself a lower event of something else.
\end{definition}

Versioning parameterizations on DSCs in turn give rise to related structures on their reachable dependency posets. Since a reachable dependency poset is complete under joins, for every two event sets differing only in versions of some events, we can take the element corresponding to the union of their event sets. This yields a  endofunction on elements of the reachable dependency poset, sending (``contracting'') the lower and higher event sets both to their corresponding union. We term such an endofunction a poset version parameterization.

\begin{definition}
A \textbf{ponucleus} is an idempotent monotone endofunction on a poset which preserves existing finite meets.
\end{definition}

\begin{lemma}
Every poset version parameterization is a ponucleus that in addition preserves existing joins.
\end{lemma}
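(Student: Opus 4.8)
The statement packages four claims about the endofunction \(\phi\) that a version parameterization induces on \(rdp(E,D)\): that \(\phi\) is monotone and idempotent, that it preserves finite meets (these three making it a ponucleus), and that it also preserves joins. Writing \(v\) for the underlying idempotent map on events and \(v[S]=\{v(e)\mid e\in S\}\) for its pointwise extension to event sets, the construction preceding the definition identifies \(\phi\) as \(\phi(X)=X\cup v[X]\), i.e. the join of \(X\) with all of its version variants. I will also use three facts about \(rdp(E,D)\) recorded earlier: its order is inclusion, its joins are unions, and the meet \(X\wedge Y\) is the largest reachable set below \(X\cap Y\), equivalently the union of all join-irreducible (``basic'', i.e. augmented possible-dependency-set) reachable sets contained in \(X\cap Y\).

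For monotonicity, idempotency, and join-preservation the plan is just direct computation. Monotonicity holds because \(S\mapsto v[S]\) and \(\cup\) are monotone for inclusion. Idempotency follows from idempotency of \(v\) on events, which gives \(v[v[X]]=v[X]\) and hence \(\phi(\phi(X))=X\cup v[X]\cup v[X]\cup v[v[X]]=\phi(X)\). Join-preservation uses that joins are unions together with \(v[X\cup Y]=v[X]\cup v[Y]\), so \(\phi(X\vee Y)=(X\cup Y)\cup v[X\cup Y]=(X\cup v[X])\cup(Y\cup v[Y])=\phi(X)\vee\phi(Y)\); since the lattice is finite this propagates to arbitrary joins.

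The substantive step is meet-preservation. One inclusion, \(\phi(X\wedge Y)\le\phi(X)\wedge\phi(Y)\), is immediate from monotonicity. For the reverse, set \(M=X\wedge Y\); since \(\phi(X)\wedge\phi(Y)\) is the union of the join-irreducible reachable sets \(B\) with \(B\subseteq\phi(X)\cap\phi(Y)\), it is enough to show every such \(B\) satisfies \(B\subseteq\phi(M)=M\cup v[M]\). I would argue this by a ``lowering'' lemma: given \(B=P_e\cup\{e\}\), every event of \(B\) lies in \(X\cup v[X]\) and in \(Y\cup v[Y]\); using the version-parameterization axiom — that a possible dependency set containing a lower event always has a companion obtained by substituting the higher version, together with the clause that no higher event is itself a lower event — one replaces the events of \(B\) that occur only as raised versions drawn from \(v[X]\) or \(v[Y]\) by common lower representatives lying in \(X\cap Y\), producing a join-irreducible reachable set \(B_0\subseteq X\cap Y\subseteq M\) with \(B\subseteq B_0\cup v[B_0]\subseteq\phi(M)\). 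Taking the union over all such \(B\) then yields \(\phi(X)\wedge\phi(Y)\le\phi(M)\), completing the equality.

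The hard part will be exactly this lowering step, and it is the only place the DSC axioms are really used. Two things must be checked together: that the lowered events can be chosen consistently across all of \(B\), so that one genuine set \(B_0\subseteq X\cap Y\) results rather than the events of \(B\) being covered piecemeal by different lowerings; and that \(B_0\) is again a reachable — indeed join-irreducible — event set, which means pushing the substitution through the transitive-closure and cycle-freeness conditions defining the DSC and through the description of the reachable dependency poset's join-irreducibles as augmented possible dependency sets. I expect the other three properties to amount to bookkeeping by comparison, and the interplay between the combinatorics of \(rdp\) and the axioms on \(v\) to be where the real care is needed.
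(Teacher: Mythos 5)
Your decomposition of the claim (monotone, idempotent, meet-preserving, join-preserving) matches the paper's, and you are right that meet-preservation is the only substantive point; but there are two problems. First, the map you analyze is not the one the paper constructs. You take \(\phi(X)=X\cup v[X]\), which only pushes versions upward, so any event set consisting entirely of highest versions is a fixed point. The paper's contraction sends \emph{both} the lower and the higher variant of an event set to their common union, i.e. it closes under version substitution in both directions: in the running example (\(a\) depends on \(b\) or \(c\), with \(c\) a higher version of \(b\)) the set \(\{c\}\) is sent to \(\{b,c\}\) and is \emph{not} fixed --- the worked example in Figure 4 lists only \(\emptyset\), \(bc\), \(abc\) as fixed points, whereas your \(\phi\) would also fix \(\{c\}\) and \(\{a,c\}\). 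Since the lemma is about that specific endofunction, and the later results about the induced nucleus depend on its fixed-point set, this is not a cosmetic difference.

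Second, even granting your formula for \(\phi\), the part of the argument that carries all the weight --- the inclusion \(\phi(X)\wedge\phi(Y)\le\phi(X\wedge Y)\) --- is announced but not carried out. You correctly observe that the meet in \(rdp(E,D)\) is not intersection (a point the paper's own proof glosses over by reasoning directly about intersections), and you correctly reduce the problem to showing that every join-irreducible reachable set below \(\phi(X)\cap\phi(Y)\) lies below \(\phi(X\wedge Y)\); but you then explicitly defer the ``lowering lemma'' that would establish this. That lemma is exactly the content of the result: the paper's two-sentence treatment of the meet case is the informal version of it (any element newly introduced into the intersection by contracting one argument must already be matched in the other, hence already appears in the contraction of the meet), so as written your proposal is a reasonable plan with its central step missing. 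The remaining items --- monotonicity, idempotence, and join-preservation via joins being unions --- are fine and agree with the paper's equally brief handling of them.
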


\begin{proof}
Clearly this function is idempotent and monotone by construction. It remains to show it preserves meets and joins.

First we consider meets. If two sets had a meet before, and one is now ``contracted'' to a higher version, this may introduce new elements in the intersection only on the condition that these elements were also in the second set. But that would mean that the elements of the second set would also be contracted in the same way, as would the elements in the intersection itself.

Next, we consider joins. If two sets had a join before, and one was contracted to a higher version, then this would introduce new elements in the union. But those elements are the same element which are introduced by contracting the union directly.
\end{proof}

Since poset version parameterizations preserve existing meets and joins, they certainly preserve meets and maximal joins, and thus are acted on by \(\BLc\).

\begin{theorem}
Given a join-preserving ponucleus \(j\), on a poset \(P\),  \(\BLc(j)\) is a nucleus on \(\BLc(P)\)
\end{theorem}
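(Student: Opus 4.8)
The plan is to verify the three defining properties of a nucleus for $\BLc(j)$ one at a time, leaning throughout on the functoriality and the universal property of $\BLc$ recorded above. Note first that the hypotheses are precisely what is needed for $\BLc(j)$ to be available as a lattice map at all: a ponucleus preserves finite meets by definition, and a join-preserving one preserves all existing joins, hence in particular all admissible joins, so $j$ is a morphism of $DSLat$ and $\BLc(j) : \BLc(P) \to \BLc(P)$ is a homomorphism of distributive lattices. This already supplies two of the three axioms at once: $\BLc(j)$ is monotone, and it preserves finite (in particular binary) meets, so $\BLc(j)(a \wedge b) = \BLc(j)(a) \wedge \BLc(j)(b)$.

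Idempotency I would dispatch formally. Since $\BLc$ is a functor, composites of $DSLat$-morphisms are again $DSLat$-morphisms, and $j$ is idempotent ($j \odot j = j$), so $\BLc(j) \odot \BLc(j) = \BLc(j \odot j) = \BLc(j)$; in particular $\BLc(j)(\BLc(j)(a)) = \BLc(j)(a)$, which is strictly stronger than the required $\BLc(j)(\BLc(j)(a)) \le \BLc(j)(a)$.

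The one property with real content is inflationarity, $a \le \BLc(j)(a)$. Here I would pass to the finite characterization $\BLc(P) \cong \Oc(\Jc(P))$ of the earlier lemma, under which the unit $bl : P \to \BLc(P)$ carries a join-irreducible $x$ to its downset $\{\, y \in \Jc(P) \mid y \le x \,\}$. Every element $a \in \Oc(\Jc(P))$ is a downset of $\Jc(P)$ and therefore equals the finite join $\bigvee_{x \in a} bl(x)$ of the images of the join-irreducibles it contains. By naturality of the unit of the reflection $\BLc \dashv i$, the square relating $j$ to $\BLc(j)$ along $bl$ commutes, so $\BLc(j)(bl(x)) = bl(j(x))$ for every $x$; and since $j$ is inflationary we have $x \le j(x)$ in $P$, whence $bl(x) \le bl(j(x)) = \BLc(j)(bl(x))$ by monotonicity of $bl$. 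Applying the homomorphism $\BLc(j)$ to the join expression for $a$ and using that it preserves finite joins then gives $\BLc(j)(a) = \bigvee_{x \in a} bl(j(x)) \ge \bigvee_{x \in a} bl(x) = a$, as desired. (The general, possibly infinite, case is handled similarly, working with admissible sets in place of downsets of join-irreducibles.)

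I expect this last step to be the crux: it is the only place where one must use explicitly how the functorial extension $\BLc(j)$ behaves on the ``new'' elements of $\BLc(P)$ --- namely by sending each to the join of the images of its join-irreducible basis --- rather than treating $\BLc(j)$ as a black box; everything else is bookkeeping with functoriality and the universal property. A minor point to watch is that $j$ may send a join-irreducible of $P$ to a composite element, but the naturality square together with monotonicity of $bl$ absorbs this cleanly. Having verified meet-preservation, idempotency, and inflationarity, we conclude that $\BLc(j)$ is a nucleus on $\BLc(P)$.
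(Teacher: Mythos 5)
Your proof is correct and follows essentially the same route as the paper's: meet-preservation and idempotence are inherited through the functoriality of \(\BLc\) on morphisms preserving meets and admissible joins, and inflationarity is checked on the join-irreducible basis and propagated to all elements via join-preservation of \(\BLc(j)\). Your treatment of the inflationarity step (decomposing each element of \(\Oc(\Jc(P))\) as \(\bigvee_{x \in a} bl(x)\) and invoking naturality of the unit \(bl\)) is a careful spelling-out of what the paper compresses into one sentence, but it is the same argument.
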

\begin{proof}
First we consider meet preservation. Since we know that \(j\) preserves meets, we only need observe that the action of \(\BLc\) on morphisms preserves meet-preservation. This follows from the fact that \(\BLc\) itself preserves meets.


Next, we consider contractivity. We already defined \(j\) to be contractive. So now we only need observe that \(\BLc\) preserves contractivity. For this to fail, it would need to extend a morphism so that some element was mapped to something below itself. But since contractivity is preserved for the basis join-irreducible elements, it must be preserved for all elements.

Finally, we consider idempotence. Again, this is given by idempotence of \(j\) combined with \(\BLc\) preserving idempotence of join-irreducible elements.
\end{proof}

As a corollary of the above, given a DSC \((E,D)\) and a version parameterization with an induced poset version parameterization \(p\), then \(\BLc(p)\) is a nucleus on \(\BLc(E,D)\).

To make this concrete, we consider our running example with three packages, such that \(a\) depends on \(b\) or \(c\), but now consider \(c\) to be a higher version of \(b\). The \(j\)-closed elements (fixpoints) of the induced nucleus on the \(rdp\) of this structure are then \(\emptyset\), \(bc\), and \(abc\), and the fixpoints of the induced nucleus on the Bruns-Lakser completion of such are \(\emptyset\), \(bc\), and \(a_ba_cbc\). This is illustrated in Fig. \ref{Fig4}. Note that every element has a unique least element of the set of fixed-points that is greater than or equal to it.

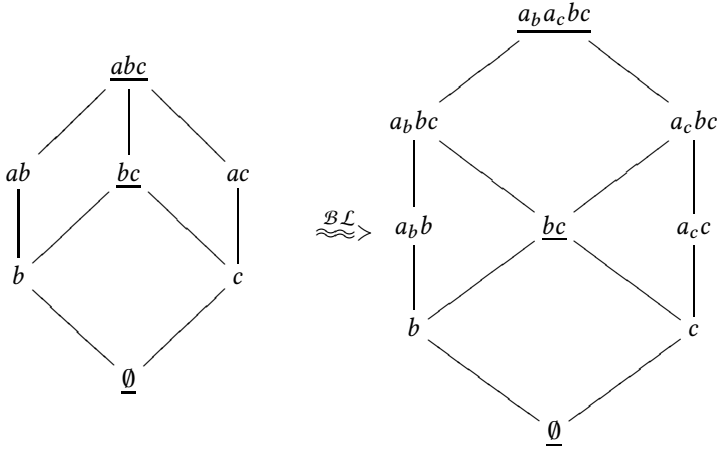
\begin{figure}
\begin{minipage}[c]{0.3\textwidth}
\begin{equation*}
    \xymatrix{& \underline{abc} \ar@{-}[dl] \ar@{-}[d] \ar@{-}[dr] & \\
      ab \ar@{-}[d] & \underline{bc} \ar@{-}[dl] \ar@{-}[dr] &
        ac \ar@{-}[d]  \\
        b \ar@{-}[dr] & & c \ar@{-}[dl]         \\
       & \underline{\emptyset} &}
\end{equation*}
\end{minipage}
\begin{minipage}[c]{0.08\textwidth}
  \begin{equation*}
    \xymatrix{\ar@2{~>}[r]^{\BLc} &}
  \end{equation*}
\end{minipage}
\begin{minipage}[c]{0.3\textwidth}
\begin{equation*}
    \xymatrix{& \underline{a_ba_cbc} \ar@{-}[dl]  \ar@{-}[dr] & \\
      a_bbc \ar@{-}[d] \ar@{-}[dr] & &
        a_cbc \ar@{-}[d] \ar@{-}[dl]\\
        a_bb \ar@{-}[d] & \underline{bc} \ar@{-}[dl] \ar@{-}[dr] &
          a_cc \ar@{-}[d] \\
         b \ar@{-}[dr] & & c \ar@{-}[dl] \\
      & \underline{\emptyset} }
\end{equation*}
\end{minipage}
\caption{The running example of choice, when equipped with a version policy that gives \(c\) as a higher version of \(b\), both on the reachable dependency poset, and its Bruns-Lakser completion. Fixed points of the induced nucleus are indicated by underline.}
\label{Fig4}
\end{figure}

\section{Free Distributive Lattices and the Modal Logic(s) of Dependencies}
As discussed above, part of the basic theory of Heyting algebras is that they possess an internal intuitionistic logic. Here we sketch how it works in our particular case.

Given a DSC \((E,D)\), we construct a language consisting of atoms given by join-irreducible elements of the reachable dependency poset (which may be thought of, as above, as events subscripted with their dependency trace), completed under the standard logical connectives. Every formula in this language corresponds to a particular element in the Merkle-lattice of our DSC, \(\BLc(E,D)\).  Conjunction corresponds to meet, disjunction to join, and implication to the relative pseudo-complement. This is a logic of reachable states of our system, and their traces, which describes all possible states of the system as disjunctions of join-irreducible states. Given two event sets, considered as reachable states, disjunction gives the set of events that have occurred in either state. Conjunction gives the set of events which have occurred in both states.


The pvp-induced nucleus discussed above equips the internal logic of \(\BLc(E,D)\) with a modal operator. We can interpret this operator as ``round (or upgrade) to the highest version'', and give corresponding interpretations of the meaning of its basic laws. \(x \rightarrow \Dia{x}\) tells us that everywhere \(x\) is valid, so too is its highest version (which is precisely how we constructed our modality to begin with). \(\Dia\Dia{x} \rightarrow \Dia{x}\) tells us that the highest version of the highest version is just the highest version. \((x \rightarrow y) \rightarrow (\Dia{x} \rightarrow \Dia{y})\)  tells us that if an implication holds for an event set, then the highest version of that set implies the highest version of the consequent. Finally, \(\Dia(x \band y) = \Dia{x} \band \Dia{y}\) tells us that the highest version of a conjunction may be computed as the conjunction of the highest versions of its constituents.

This modality is powerful enough to yield an internal ``bind'' operator. We have a strength that gives \(x \band \Dia{y} \rightarrow \Dia(x \band y)\). Furthermore, we have an internal evaluation \(x \band (x \rightarrow y) \rightarrow y\). Together, they allow us to generate a ``bind'' of the form \(\Dia{x} \band (x \rightarrow \Dia{y}) \rightarrow \Dia{y}\), read as ``if we know that some version of x implies the highest version of y, then to imply y it suffices to consider the highest version of x''. Hence, we have a computational monad in the sense of Moggi \cite{moggi1991notions}, and rounding (or ``upgrading'') is an ``effect''. 



From the standpoint of nondeterministic concurrent semantics, all of this is perfectly reasonable. From two configurations (considered as nondeterministic traces), ``or'' gives us their combination, and ``and'' gives us the least trace which they share between them. In a sense, given any concurrent program, this construction provides the internal logic of its scheduler. In a sense, concurrent programs each give rise to ``little programming languages'' that specify particular paths within them.

Nonetheless, from the standpoint of dependencies, this is insufficient. It is a logic of states in the system, but it is not a logic \textit{about} states in the system -- i.e., requirements. A a logical ``or''  would express that we're asking for one or the other event set as such, unlike the disjunction given here. Similarly, a logical ``and'' would express that we are requiring all events in both sets (i.e. what disjunction actually provides). Without this missing logical ``or'', for example, we cannot give a formula that specifies ``any way to reach an event, regardless of the dependencies'', as in the logic thus far, \(a_b \bor a_c\) specifies \textit{both} ways, as opposed to \textit{either} way.

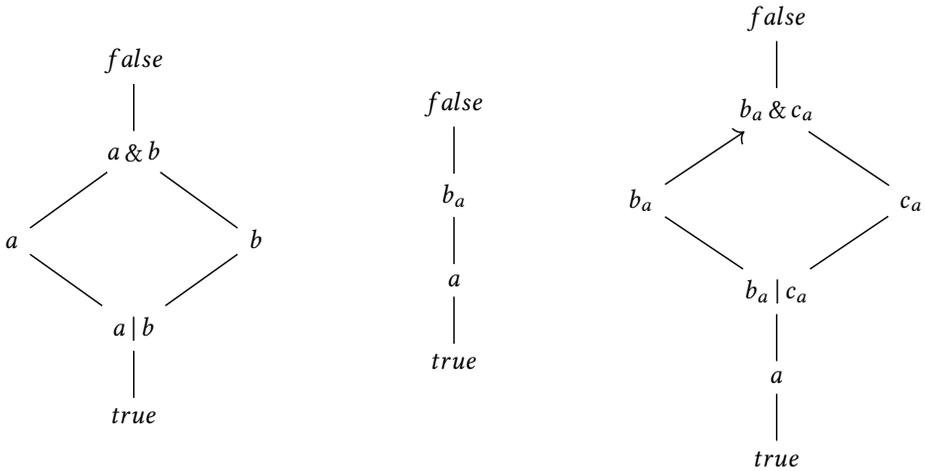
\begin{figure}
\begin{minipage}[c]{0.3\textwidth}
\begin{equation*}
\begin{tikzcd}
             & false                            &              \\
             & a \band b \arrow[u, no head]              &              \\
a \arrow[ru, no head] &                                 & b \arrow[lu, no head] \\
             & a \bor b \arrow[lu, no head] \arrow[ru, no head] &              \\
             & true \arrow[u, no head]                 &             
\end{tikzcd}
\end{equation*}
\end{minipage}
\begin{minipage}[c]{0.3\textwidth}
\begin{equation*}
\begin{tikzcd}
false                    \\
b_{a} \arrow[u, no head] \\
a \arrow[u, no head]     \\
true \arrow[u, no head] 
\end{tikzcd}
\end{equation*}
\end{minipage}
\begin{minipage}[c]{0.3\textwidth}
\begin{equation*}
\begin{tikzcd}
               & false                                                &                         \\
               & b_a \band c_a \arrow[u, no head]                     &                         \\
b_a \arrow[ru] &                                                      & c_a \arrow[lu, no head] \\
               & b_a \bor c_a \arrow[lu, no head] \arrow[ru, no head] &                         \\
               & a \arrow[u, no head]                                 &                         \\
               & true \arrow[u, no head]                              &                        
\end{tikzcd}
\end{equation*}
\end{minipage}
\caption{The free distributive lattice on two discrete elements,  on two elements with the first depending on the second. and on three elements with the latter two both depending on the first.}
\label{Fig5}
\end{figure}

Some further mathematical constructions are required to build back up to the logic we'd really like. Returning to the finite characterization of the extended Bruns-Lakser completion, we note that taking the downsets of the join-irreducible elements is effectively taking their free join-completion. Dually, taking up-sets is a free meet-completion. In fact, for a discrete set \(S\), \(\Ucc(\Oc(S))\) is the \textbf{free distributive lattice} over S, which is a well studied mathematical object \cite{gratzer2009lattice}. The elements of this belong to \(\Pc(\Pc(S))\) and consist of its \textit{irredundant subsets}. This is to say that we can read these elements as logical expressions in disjunctive normal form, for example as \(ab \bor c\). An irredundant set is one in which the clauses have been simplified -- i.e. in which \(a \bor ab\) has been reduced to simply \(a\).  In this construction, join remains union -- but now it is not a union of sets of atoms, but a union of \textbf{sets of sets} of atoms. Meet however, is no longer intersection. Rather, it becomes convolution, just as in standard logic! This is to say that the meet (and) of \(a \bor b\) and \(c \bor d\) becomes the irredundant (simplified) core of \(a \band c \bor a \band d \bor b \band c \bor b \band d\).

There are some interesting open problems regarding free distributive lattices, in particular the search for a closed form expression that counts the elements of such a construction over a set of a given size (aka the Dedekind numbers).

The free distributive lattice construction extends to any poset \(P\), where the up-sets of the downsets are the free distributive lattice of a poset. This is a less studied, but still known construction \cite{johnstone1982stone}. As above, join is union and meet is convolution. But when we ``multiply'' two atoms, we don't simply conjoin them. Rather, we take their join in the underlying poset. Hence we arrive at a quotient of the free distributive lattice generated when the elements of \(P\) are considered simply as a set, and potentially a much smaller one. For example, if the underlying poset is a linear order, then the free distributive lattice over it is equivalent to the original poset. In general, the more ordering in the original poset, the smaller the resultant free distributive lattice over it. See Fig. \ref{Fig5} for a few examples.

 In line with this, we can build the free meet-completion of the idempotent distributive lattice completion of a dependency poset by the compound construction \(\Ucc(\Oc(\Jc(P)))\). For a DSC, the internal logic of this construction, which we call \(\UBLc(E,D)\) has precisely the same syntax as that of the internal logic \(\BLc(E,D)\). However, while the normal form of the latter consisted of a set of irreducible elements, the normal form for our new construction consists of a set of sets -- i.e. a boolean expression of these elements given in disjunctive normal form, and simplified by reduction with regards to redundancy as well as the ordering relationship given on the join-irreducibles. We refer to the elements of \(\UBLc(E,D)\) not as event sets, but as event equations, since they consist of events related by conjunction and disjunction. Further, it should be noted that since \(\BLc\) is functorial, and taking up-sets is functorial, then the compound \(\UBLc\) is also functorial, and has an action on morphisms of posets.
 


To fully explore the dependency standpoint. we would also like to interpret the modal operator associated with a versioning parameterization in \(\UBLc(E,D)\) as well. This brings us to the following:

\begin{theorem}
Given a DSC \((E,D)\) and a version parameterization with an induced poset version parameterization \(p\), then \(\UBLc(p)\) is a nucleus on \(\UBLc(E,D)\). Furthermore, it preserves not only meets, but also joins.
\end{theorem}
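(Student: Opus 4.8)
The plan is to recognise the statement as an instance of a stability property: the class of inflationary, idempotent distributive-lattice endomorphisms of a finite distributive lattice---equivalently, join-preserving nuclei---is closed under the free meet-completion functor $\Ucc$, and then to feed in what we already know about $\BLc(p)$. First I would note that, by the Lemma on poset version parameterizations, $p$ preserves all finite meets and all joins of $rdp(E,D)$; since $rdp(E,D)$ is a finite lattice, $p$ is therefore an inflationary, idempotent \emph{lattice} endomorphism. Consequently $\BLc(p)$ is not merely a nucleus on $\BLc(E,D)$ (which the corollary to the preceding theorem already records), but, because $p$ preserves meets and admissible joins and $\BLc$ sends such maps to genuine distributive-lattice homomorphisms, $\BLc(p)$ is an inflationary, idempotent distributive-lattice \emph{homomorphism}. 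It thus suffices to prove: if $h : L \to L$ is an inflationary, idempotent distributive-lattice endomorphism of a finite distributive lattice $L$, then $\Ucc(h)$ is one of $\Ucc(L)$ --- for then $\UBLc(p) = \Ucc(\BLc(p))$ is the desired nucleus, and its being a homomorphism yields join-preservation.

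Next I would unwind $\UBLc = \Ucc \circ \BLc$, so that $\UBLc(E,D) = \Ucc(\BLc(E,D))$ and, by functoriality of the composite, $\UBLc(p) = \Ucc(\BLc(p))$. Dually to $\BLc$, the free meet-completion $\Ucc$ is functorial on maps preserving finite joins and admissible meets, sending them to distributive-lattice homomorphisms; since a distributive-lattice homomorphism is in particular such a map, $\Ucc(\BLc(p))$ is a distributive-lattice homomorphism of $\UBLc(E,D)$. This at once gives preservation of finite meets \emph{and} finite joins --- discharging the meet clause of ``nucleus'' and proving the ``Furthermore'' assertion --- and, by functoriality, $\Ucc(\BLc(p)) \circ \Ucc(\BLc(p)) = \Ucc(\BLc(p) \circ \BLc(p)) = \Ucc(\BLc(p))$, which is idempotence.

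The one clause needing a short hand argument is contractivity, $\mathrm{id} \le \UBLc(p)$. Here I would use that the unit $\BLc(E,D) \hookrightarrow \UBLc(E,D)$ is an order-embedding whose image generates $\UBLc(E,D)$ under finite meets and joins, and that $\UBLc(p)$ restricts along this unit to $\BLc(p)$, which is contractive on $\BLc(E,D)$. So $\mathrm{id}$ and $\UBLc(p)$ are two distributive-lattice homomorphisms on $\UBLc(E,D)$ with $\mathrm{id}(g) \le \UBLc(p)(g)$ for every generator $g$; writing an arbitrary event equation as a finite join of finite meets of generators and using that both maps commute with $\vee, \wedge$ while $\vee, \wedge$ are monotone, we obtain $\mathrm{id} \le \UBLc(p)$ everywhere. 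Assembling the pieces, $\UBLc(p)$ is meet-preserving, contractive and idempotent --- a nucleus --- and additionally join-preserving. I expect the only genuinely load-bearing points to be (i) upgrading ``$\BLc(p)$ is a nucleus'' to ``$\BLc(p)$ is a distributive-lattice homomorphism'', which is what makes $\Ucc$ applicable and what propagates join-preservation, and which rests on $p$ preserving admissible joins (it preserves all of them); and (ii) the ``comparable on generators implies comparable everywhere'' step for contractivity. Everything else is bookkeeping about the composite $\UBLc = \Ucc \circ \BLc$ and the functoriality already established for its two factors.
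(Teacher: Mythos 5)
Your proof is correct and follows essentially the same route as the paper's: factor \(\UBLc(p)\) as \(\Ucc(\BLc(p))\), use functoriality of the up-set (free meet-completion) functor and the fact that it lifts maps to distributive-lattice homomorphisms to get preservation of both meets and joins, and transfer contractivity and idempotence from \(\BLc(p)\). You fill in details the paper leaves implicit --- idempotence via functoriality and contractivity via the generators argument --- but the decomposition and the load-bearing facts are the same.
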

\begin{proof}
We have already established that \(\BLc\) yields a contractive endofunction on a distributive lattice (and more). Since taking up-sets is functorial, we know that this in turn yields a contractive endo-function on the compound. Furthermore, since the up-set functor runs from all posets to distributive lattices, it lifts order-preserving morphisms of posets into meet and join preserving morphisms of distributive lattices. Hence, the induced contractive endofunction preserves both meets and joins, and is a nucleus.
\end{proof}

This in turn induces a modality, as above, but now on the logic not only of event sets, but of event equations. If we view reachable configurations as nondeterministic traces, then the ``highest version'' is most general state with regards to any given configuration -- i.e. associates to any trace another one which provides at least as many options for further execution as before, and possibly more. The various laws can be read similarly. 

Thus we get a simple type theory of package dependencies with an internal ``version policy'' modality that obeys the usual algebraic identities, and lets us treat contraction along the version policy as a monadic effect. Package version policies have been a subject of some debate in the applied world. But as we now see, all told, a package version policy in a dependency structure is just a certain type of monad in the free distributive lattice over the join-irreducibles of the associated poset.

\begin{figure}
\begin{equation*}
\begin{tikzcd}[row sep=scriptsize, column sep=large,cells={nodes={fill=white,fill opacity=1}}]
                            &  &                                                                          & false                                                                                              &                                                                          &  &                             \\
                            &  &                                                                          & \underline{a_b a_c} \arrow[u, no head]                                                                         &                                                                          &  &                             \\
a_b c \arrow[rrru, no head] &  &                                                                          &                                                                                                    &                                                                          &  & a_c b \arrow[lllu, no head] \\
a_b \arrow[u, no head]      &  &                                                                          & a_b c \bor a_c b \arrow[lllu, no head] \arrow[rrru, no head]                                       &                                                                          &  & a_c \arrow[u, no head]      \\
                            &  & a_b \bor a_c b \arrow[ru, no head] \arrow[rrrruu, no head]               & \underline{bc} \arrow[u, no head]                                                                              & a_c \bor a_b c \arrow[lu, no head] \arrow[lllluu, no head]               &  &                             \\[10pt]
                            &  & a_b \bor bc \arrow[ru, no head] \arrow[lluu, no head] \arrow[u, no head] & a_b \bor a_c \arrow[llluu, no head] \arrow[rrruu, no head] \arrow[lu, no head] \arrow[ru, no head] & bc \bor a_c \arrow[lu, no head] \arrow[rruu, no head] \arrow[u, no head] &  &                             \\
b \arrow[uuu, no head] \arrow[rrrdd, no head]   &  &                                                                          & a_b \bor bc \bor a_c \arrow[lu, no head] \arrow[u, no head] \arrow[ru, no head]                    &                                                                          &  & c \arrow[uuu, no head] \arrow[llldd, no head]               \\
                            &  & c \bor a_b \arrow[rrrru, no head] \arrow[ru, no head]                    &                                                                                                    & b \bor a_c \arrow[llllu, no head] \arrow[lu, no head]                    &  &                             \\[10pt]
                            &  &                                                                          & b \bor c \arrow[ru, no head] \arrow[lu, no head]     &                                                                          &  &                             \\
                            &  &                                                                          & true \arrow[u, no head]                                                                            &                                                                          &  &                            
\end{tikzcd}\end{equation*}
\caption{The free distributive lattice on the Bruns-Lakser completion of our running example, \(S_7\)}.
\label{Fig6}
\end{figure}
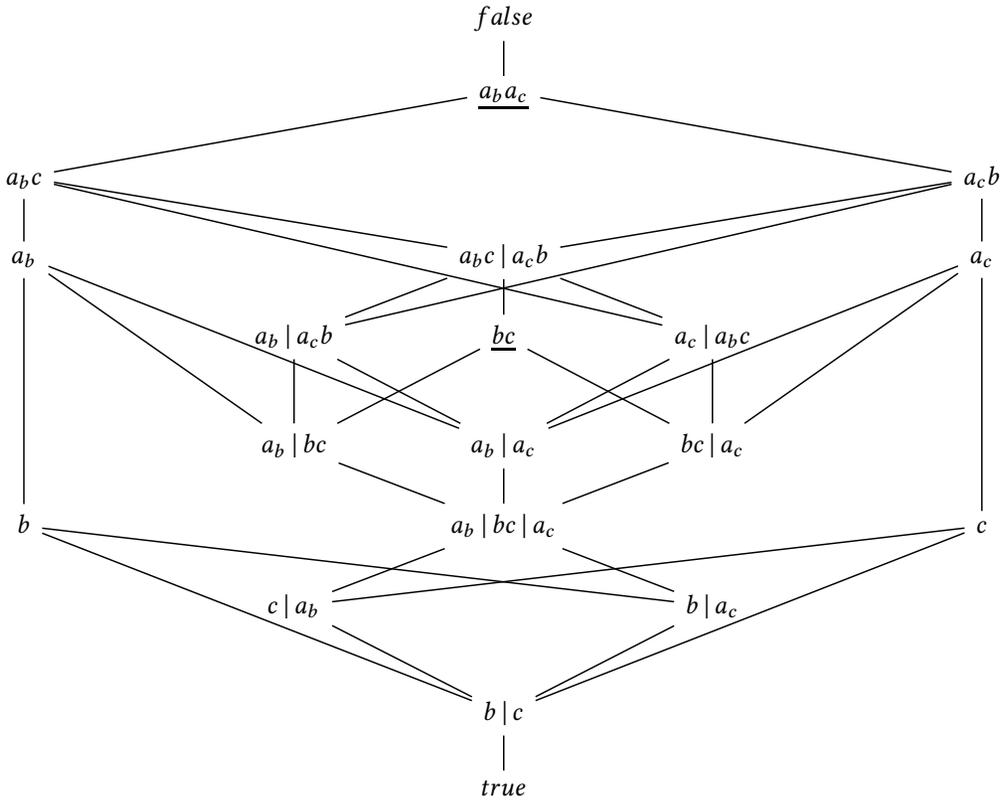

To illustrate this, we consider again our running example, with \(a\) depending on \(b\) or \(c\) and \(c\) a higher version of \(b\), and give the lattice of equations on it (Fig. \ref{Fig6}). Note that while this is quite large, it is considerably smaller than a full double powerset on packages (which, given four generators as \(a_b\) differs from \(a_c\), would have 256 elements), and smaller than the free distributive lattice on four discrete elements as well (which has 168 elements). So the dependency structure does help considerably in reducing the space under consideration. As before, elements which are fixed points of the induced version modality are underlined. If we consider only equations which are pvp-modal, then the size of the lattice is reduced dramatically further -- in this case the lattice of 21 elements reduces to a linear order of four elements (true, false, \(bc\), \(a_b a_c\)).

Clearly there are nuclei in the free distributive lattice over a poset which are not induced by nuclei on their join-irreducible elements, as not all nuclei preserve joins. Thus, we can imagine many other types of modal contraction operators which do not come from versioning relations, but which also reduce ``fuzzy'' specifications to more definite (unique) specifications of event sets. In  a sense these would all specify ways to contract away general powersets of events to singleton powersets, and can be seen as different specifications of dependency solvers. This provides some support for the slogan ``solving is an effect''.

\section{Solutions to Dependency Problems, and their Combinatorial Properties}
The above illustrates that when we wish to consider search problems regarding packages (or events), considering their dependency structures and their versioning structures can drastically reduce the complexity of the task. To make this precise, we define a general notion of a dependency problem, and show that the complexity of solving such problems may be related to order-theoretic properties of the dependency structures they are calculated over. On particular application of this is in finding solutions to package requirements that are free of incompatibilities.

\begin{definition}
A dependency problem in a DSC \((E,D)\) is the pair of a formula \(\phi\) in \(\UBLc(E,D)\) and a monotone increasing (i.e. growing as further elements are added to the source set) objective function of type \(\Pc(E) \rightarrow \mathbb{R}\). A solution to such a problem is an event set which satisfies the formula and minimizes the objective function.
\end{definition}

This naturally encodes many problems. For example, it allows us to calculate the minimal dependencies needed to reach a certain state. Further, if we associate a cost function which counts the number of incompatibilities or conflicts in a given event set, then a conflict-free solution is possible only when a minimal solution has a cost of zero.

Even without further constraints on the objective function, solving a dependency problem does not mean undertaking a brute force search over all event sets which satisfy the formula. In particular, since the objective function is monotone, we need only examine the minimal event sets which satisfy it (in the ordering induced by \(\BLc(E,D)\). Such sets in \(\BLc(E,D)\) form a maximal  antichain (or cut) -- i.e. they are unordered with relation to one another, but every other point on the lattice is ordered with regards to at least one such set. The former property follows immediately from monotonicity. The latter comes from the fact that the top element in the lattice satisfies all formulae in our language (since it has no notion of negation), and hence every set which does not satisfy \(\phi\) is ordered in relation to at least one set which does.  More precisely, we have the following lemma, which is straightforward, but we have not seen recorded in existing literature:

\begin{lemma}
The collection of all antichains in the downsets of a finite poset \(P\), under the inclusion ordering, corresponds to the free distributive lattice over \(P\).
\end{lemma}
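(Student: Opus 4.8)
The plan is to recognize the statement as the composite of two facts, one classical and one already recorded above. Write \(L = \Oc(P)\), which is a finite distributive lattice. The classical fact is the standard correspondence between antichains of a finite poset \(Q\) and its order-ideals: every up-set \(U \subseteq Q\) is the up-closure \(\uparrow\min(U)\) of a unique antichain, and conversely every antichain \(A\) generates the up-set \(\uparrow A\); this gives a bijection between the antichains of \(Q\) and \(\Ucc(Q)\) (and, dually, \(\Oc(Q)\)). Applying this with \(Q = L = \Oc(P)\) produces a bijection between the antichains ``in the downsets of \(P\)'' and \(\Ucc(\Oc(P))\). The second fact is exactly the identification noted earlier: \(\Ucc(\Oc(P))\) is the free distributive lattice over \(P\). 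Composing the two yields the lemma.

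The steps, in order, would be: (1) observe that \(\Oc(P)\) is finite and distributive; (2) set up the antichain/up-set bijection \(A \mapsto \uparrow A\) on \(\Oc(P)\), with inverse \(U \mapsto \min(U)\); (3) transport the inclusion order on \(\Ucc(\Oc(P))\) across this bijection --- it becomes the relation ``\(A \le B\) iff every member of \(A\) lies above some member of \(B\)'', which is precisely inclusion once each antichain is read as the up-set of satisfying downsets it cuts out (the down-set representation would instead give the dual convention, so one must fix the variance so as to land on \(\Ucc(\Oc(P))\) rather than its opposite); (4) check that the bijection carries the lattice operations correctly --- union of up-sets corresponds to disjunction of DNF clauses, and intersection of up-sets to their convolution, where the ``product'' of two clauses is their join in \(\Oc(P)\), matching the description of the free distributive lattice over a poset given earlier --- so the map is an isomorphism of bounded distributive lattices, carrying the empty antichain to \(\bot\) (``false'') and the singleton antichain \(\{\emptyset\}\) on the least downset to \(\top\) (``true''); (5) invoke \(\Ucc(\Oc(P)) = \) the free distributive lattice over \(P\) to conclude.

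Since everything is finite, an order-isomorphism between two lattices is automatically an isomorphism of bounded lattices, so step (4) is really only a sanity check rather than extra work, and the argument is routine overall. The only point that genuinely needs care --- and the sole ``obstacle'' worth flagging --- is the bookkeeping of variance in step (3): an antichain may be presented either by the down-set or by the up-set it generates, and only one of these choices makes ``the inclusion ordering'' coincide with the free-distributive-lattice order \(\Ucc(\Oc(P))\) (the other gives \(\Oc(\Oc(P))\), the free distributive lattice over \(P^{op}\)). Pinning this down, together with confirming that the bounds and the clause-product agree with the poset-indexed free distributive lattice of the earlier discussion, is essentially the whole of the content.
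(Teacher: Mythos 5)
Your proposal is correct and follows essentially the same route as the paper's own proof: both reduce the claim to the identification of the free distributive lattice over \(P\) with \(\Ucc(\Oc(P))\) and then invoke the standard bijection between antichains and up-sets of the finite poset \(\Oc(P)\) via minimal elements. Your additional care about variance and the transport of the lattice operations is a reasonable elaboration of details the paper leaves implicit, but it does not change the argument.
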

\begin{proof}
We have already defined the free distributive lattice as the up-sets of the downsets. Hence we need only show that up-sets are one-to-one with antichains. In a finite lattice, every up-set corresponds to a unique set of basis elements, which are unordered with respect to one another, and hence form an antichain. Conversely, every antichain thus generates a distinct up-set.
\end{proof}

Since cuts are one-to-one with downsets (or equivalently up-sets), cuts correspond to elements of \(\UBLc(E,D)\) -- i.e. they are one-to-one with formulae in normal form.  

This encourages us to focus on the efficiency of enumerating maximal antichains, and in particular the maximum size of an antichain within a poset -- known as the width of the poset, which we denote as \(\Wf(P)\). This tells us, given a dependency structure, the maximum number of sets we have to consider in any dependency problem, with any specified formula. This is to say that the maximum width of the collection of downsets over a poset is the same as the maximum number of disjuncts in a formula over the join-irreducible elements of the poset presented in disjunctive normal form, and modulo appropriate relations.  

By Sperner's Theorem, first published in 1928, the powerset of a set with \(n\) elements, under inclusion ordering, has a width of \(\binom{n}{\ceil{n/2}}\).\footnote{Because this picks out the central elements of Pascal's triangle, using ceiling or floor here yields the same sequence. For symmetry with our more general result, contrary to standard convention, we prefer to use ceiling.} In the other extreme, when all elements of the underlying poset are strictly ordered, then all elements of the resultant downset are also strictly ordered, and hence the width is 1. This yields a key insight -- the complexity of solving a dependency problem is not strictly a function of the quantity of events -- rather, it is jointly determined by the quantity of events and the degree of dependency induced by their underlying topological structure. Sperner's Theorem lies at the foundation of a field of combinatorics known as extremal set theory and is closely related to the methods and tools of algebraic combinatorics. Using techniques from these fields, we can give bounds that capture this general relationship between width, size, and dependency degree.

A key step in doing so is a lemma provided to us by Richard Stanley \cite{343183}.

\begin{lemma}
\textbf{Stanley's Width Lemma (2019)}: Define on an integer \(h\), \((h) = 1 + x + x^2 \ldots x^{h-1}\). In a product of chains (linear orders considered as posets) of sizes \(h_1 \ldots h_n\), the width is given as the middle coefficient of the polynomial \((h_1) * (h_2) * \ldots * (h_n)\).
\end{lemma}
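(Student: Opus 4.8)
The plan is to recognize the product of chains $P = C_{h_1}\times\cdots\times C_{h_n}$ (elements the tuples $(a_1,\dots,a_n)$ with $0\le a_i\le h_i-1$, ordered componentwise) as a graded poset, and thereby reduce the width to the size of a single rank level. Give each element the rank $\rho(a_1,\dots,a_n)=a_1+\cdots+a_n$; covers increase $\rho$ by exactly one, and the top rank is $N=\sum_i(h_i-1)$. The number of elements of rank $k$ is then the number of ways to write $k$ as $a_1+\cdots+a_n$ with $0\le a_i\le h_i-1$, which is exactly the coefficient of $x^k$ in $\prod_i(1+x+\cdots+x^{h_i-1})=(h_1)*(h_2)*\cdots*(h_n)$; so this product polynomial is precisely the rank-generating function of $P$. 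The involution $(a_i)\mapsto(h_i-1-a_i)$ is a rank-reversing automorphism of $P$, hence the polynomial is palindromic and ``the middle coefficient'' is unambiguous (the coefficients of $x^{\floor{N/2}}$ and $x^{\ceil{N/2}}$ coincide).

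First I would dispatch the easy direction: each rank level $\rho^{-1}(k)$ is an antichain, so $\Wf(P)$ is at least every rank size, in particular at least the coefficient of $x^{\floor{N/2}}$. The substance is the reverse inequality, which I would obtain from a \emph{symmetric chain decomposition} of $P$: a partition of $P$ into chains, each containing exactly one element of every rank in some interval $\{k,k+1,\dots,N-k\}$. This is the classical de Bruijn--Tengbergen--Kruyswijk construction, proved by induction on $n$. The base $n=1$ is a single chain. For the step, take a symmetric chain decomposition of $Q=C_{h_1}\times\cdots\times C_{h_{n-1}}$; for each of its symmetric chains $c$, spanning ranks $k$ through $N_Q-k$ where $N_Q=\sum_{i<n}(h_i-1)$ is the top rank of $Q$, the subposet $c\times C_{h_n}$ is a product of two chains, i.e. a rectangular grid, which admits its own standard decomposition into nested symmetric chains; the union of these over all $c$ is a symmetric chain decomposition of $Q\times C_{h_n}=P$. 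The point needing care is that the grid chains come out symmetric about the \emph{global} middle $N/2$: this is automatic, since $c\times C_{h_n}$ occupies the rank interval $\{k,\dots,N-k\}$ (using $N=N_Q+h_n-1$), which is symmetric about $N/2$, and a grid's symmetric chain decomposition respects the center of its own box.

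Finally I would close the loop. In any symmetric chain decomposition every chain meets the middle rank $\rho^{-1}(\floor{N/2})$ in exactly one element, so the number of chains equals $|\rho^{-1}(\floor{N/2})|$, which we identified with the middle coefficient of $(h_1)*(h_2)*\cdots*(h_n)$. But a cover of $P$ by $m$ chains forces every antichain to have at most $m$ elements, so $\Wf(P)$ is at most that coefficient; together with the first inequality this yields equality. As a byproduct, since distinct chains cover each rank level, every rank size is at most the middle one, which re-derives the unimodality of the product polynomial, although the argument does not use it as an input. I expect the induction building the symmetric chain decomposition---and within it the bookkeeping that keeps the grid chains centered about the correct global rank---to be the main obstacle; everything else is immediate.
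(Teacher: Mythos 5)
The paper does not actually prove this lemma: it is stated as imported, with a citation to Stanley's MathOverflow answer, so there is no in-paper argument to compare yours against. Judged on its own, your proof is correct and complete in outline. It is the classical de Bruijn--Tengbergen--Kruyswijk argument: identify the rank-generating function of \(C_{h_1}\times\cdots\times C_{h_n}\) with \((h_1)*(h_2)*\cdots*(h_n)\), get the lower bound from the middle rank level being an antichain, and get the upper bound by exhibiting a symmetric chain decomposition, built inductively by decomposing each \(c\times C_{h_n}\) (a rectangle) into nested hook-shaped symmetric chains. The two points that need care --- that the grid chains are saturated and centered about the \emph{global} middle rank \(N/2\) rather than their local one, and that every symmetric chain meets rank \(\floor{N/2}\) exactly once so the chain count equals the middle coefficient --- are both addressed, and your rank-reversing involution \((a_i)\mapsto(h_i-1-a_i)\) correctly disposes of the ambiguity in ``middle coefficient.'' For contrast, the argument Stanley would more likely have had in mind reaches the same conclusion less constructively, via the fact that products of chains are rank-symmetric, rank-unimodal Sperner (indeed Peck) posets, with unimodality of the coefficient sequence supplied by general machinery; your route is more elementary and self-contained, at the cost of the bookkeeping in the induction, and it yields the unimodality as a byproduct rather than assuming it. One presentational nit: you should say explicitly that the hook decomposition of a \(p\times q\) grid exists and is symmetric about the grid's center (it is standard, but it is the one ingredient you invoke without constructing).
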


A corollary of this, which is more straightforward to compute is the following:  Define \(\Mf(a,b)\) as the central (maximal) coefficient of the formal polynomial expansion of \((1 + x + x^2 ... + x^a)^b\). (When \(a\) is 2, this is the central binomial coefficient, as appearing in Sperner's Theorem, as \(a\) increases this results in central coefficients of higher multinomials). Given a product of \(x\) chains all of size \(y\), then the width is \(\Mf(x,y)\).

With this in hand, we can provide an upper bound on the width of our dependency structures under consideration:

\begin{theorem}
Define \(\Hf(P)\) as the height of a poset, i.e. the length of its longest chain. Given any poset \(P\), then \(\Wf(\Oc(P)) \le \Mf(2,\Wf(P)) * \Mf(\Hf(P),\ceil{\Wf(P)/2})\).
\end{theorem}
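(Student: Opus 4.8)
The plan is to reduce to a product of chains, where Stanley's Width Lemma applies directly, via Dilworth's theorem together with the order-embedding of a downset lattice into the product of the downset lattices of a chain decomposition.

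First I would invoke Dilworth's theorem to write the underlying set of $P$ as a disjoint union $C_1 \sqcup \cdots \sqcup C_w$ of chains, where $w = \Wf(P)$; since each $C_i$ is a chain in $P$, it has at most $\Hf(P)$ elements. Next, the map sending a downset $U$ to the tuple $(U \cap C_1, \ldots, U \cap C_w)$ is an order-embedding $\Oc(P) \hookrightarrow \prod_{i=1}^{w}\Oc(C_i)$: it is injective because the $C_i$ exhaust $P$, monotone because intersections are monotone, and order-reflecting because $U = \bigcup_i (U \cap C_i)$. Consequently $\Wf(\Oc(P)) \le \Wf\!\left(\prod_{i=1}^{w}\Oc(C_i)\right)$, so it is enough to bound the latter.

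Each $\Oc(C_i)$ is again a finite chain --- the downsets of an $m$-element chain form an $(m+1)$-element chain --- of size at most $\Hf(P)+1$. Thus $\prod_i \Oc(C_i)$ order-embeds into a product of $w$ chains each of size $\Hf(P)+1$, and width is monotone under order-embeddings, so by Stanley's Width Lemma $\Wf\!\left(\prod_i \Oc(C_i)\right)$ is bounded by the central coefficient of $(1 + x + \cdots + x^{\Hf(P)})^{w}$.

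The remaining task is purely combinatorial: bound that central coefficient by $\Mf(2,\Wf(P)) \cdot \Mf(\Hf(P),\ceil{\Wf(P)/2})$. The idea is to split the $w$ identical factors into $\ceil{w/2}$ blocks (pairs, with one singleton left over when $w$ is odd); the central coefficient of the $\ceil{w/2}$-fold product of the ``paired-up'' chain polynomial is precisely an $\Mf(\Hf(P),\ceil{\Wf(P)/2})$ term, while the residual slack from the pairing and parity is absorbed into the $\Mf(2,\Wf(P))$ factor. Concretely this uses the elementary convolution estimate $[x^{k}](fg) \le \big(\max_j [x^{j}]f\big)\,g(1)$ for polynomials with non-negative coefficients, applied blockwise, together with the symmetry and unimodality of chain polynomials and of their products and the defining identities of $\Mf$. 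I expect this last repackaging step --- forcing the crude product-of-chains bound onto exactly the asserted closed form while keeping the even/odd bookkeeping honest --- to be the only delicate point; everything preceding it is a routine assembly of Dilworth's theorem, the downset embedding, and Stanley's Width Lemma.
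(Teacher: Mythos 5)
Your steps through the Stanley bound are sound, and in fact tighter than the paper's own argument: Dilworth gives the decomposition of \(P\) into \(\Wf(P)\) chains of size at most \(\Hf(P)\) (the paper says ``maximal antichains,'' but what its proof actually uses is a chain decomposition), your map \(U \mapsto (U \cap C_i)_i\) is a genuine order-embedding of \(\Oc(P)\) into a product of chains, and Stanley's lemma then yields the correct bound \(\Wf(\Oc(P)) \le [x^{\floor{w\Hf(P)/2}}](1 + x + \cdots + x^{\Hf(P)})^{w}\) with \(w = \Wf(P)\). The paper instead tries to count an antichain of downsets by multiplying a Sperner count over the set of chains each downset touches by a Stanley width inside a half-sized subfamily of chains; that decomposition presupposes that the ``supports'' of the members of an antichain of downsets themselves form an antichain of subsets, which is not true.

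The genuine gap is your final repackaging step, and it is not merely delicate --- it is impossible, because the inequality it needs is false. Take \(P\) to be the disjoint union of two \(3\)-element chains. Then \(\Wf(P) = 2\), \(\Oc(P)\) is the product of two \(4\)-element chains, and \(\Wf(\Oc(P)) = 4\) (your step-three bound reproduces this exactly: the central coefficient of \((1+x+x^2+x^3)^2\) is \(4\)). But the asserted right-hand side is \(\Mf(2,2)\cdot\Mf(\Hf(P),\ceil{2/2}) = \Mf(2,2)\cdot\Mf(\Hf(P),1)\), and \(\Mf(\cdot,1)=1\) since every coefficient of a single chain polynomial is \(1\), so the bound is \(\Mf(2,2) \in \{2,3\}\) depending on whether one reads \(\Mf(2,b)\) as the central coefficient of \((1+x)^b\) (as the paper's ``central binomial'' gloss and its Sperner specialization require) or of \((1+x+x^2)^b\). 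Either way \(4 \not\le 3\), so no amount of convolution bookkeeping in your last step can succeed; the same example refutes the theorem as stated, and equally breaks the paper's own proof, since the supports of the four incomparable downsets do not form an antichain in the powerset of the two chains. The honest conclusion of your method is the weaker (and correct) statement that \(\Wf(\Oc(P))\) is at most the central coefficient of \((1 + x + \cdots + x^{\Hf(P)})^{\Wf(P)}\), which still collapses to Sperner's theorem when \(P\) is discrete.
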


\begin{proof} 
We first partition \(P\) into a collection of maximal antichains. There are, by definition \(\Wf(P)\) many of these. We next ``round'' each antichain up to \(\Hf(P)\). Now, by Sperner, the powerset of this collection of chains (considered as a discrete set) has the width given in the first half of our product. But we are interested not only in the powerset, but the ``power-product''. This is to say, for each choice of a collection of chains, we must \textit{also} make a choice of an element within each chain. Therefore, for each maximal subset of chains (which by Sperner necessarily have size \(\Wf(P)/2\)), we calculate the maximal size of an antichain within it. For that, we make use of Stanley's width lemma. The resultant product then gives the total maximal size of an antichain.
\end{proof}

When a poset is discrete (i.e., a set) then this collapses to a statement of the central inequality of Sperner's Theorem.

Clearly, as the width of a poset decreases, so too does the width of its downset lattice. This validates the observation that additional dependency structure on a collection of events (resp packages) allows a much more efficient traversal of a much smaller search space.

\section{Related and Future Work}

We hope that this paper has made a convincing case for the pervasiveness of dependency structures, and thus the range of possible future applications stemming from their study. This section is therefore rather broad-ranging due to the many avenues we believe future research may take.

Some key work in studying the semantics and solvability of package dependencies, including basic results about complexity (in particular NP-completeness),  has been conducted in \cite{di2006edos, abate2012dependency}. Dependency structures also play a heavy, although largely implict role in the formal treatment of build systems as studied in \cite{mitchell:shake_24_sep_2018}. A programming language approach to the semantics of version control has been studied in \cite{swierstra2014semantics}. Version control and attendant topological models have been studied in \cite{angiuli2014homotopical} and \cite{mimram2013categorical}. We believe our work makes a contribution by highlighting the essential similarities involved in these approaches, and its relationship to concurrent semantics. As one example of transfer in the other direction, we have provided a meaningful interpretation of the versioning parameterization modality in terms of dependency structures and packages. However, the same modality also may be interpreted in terms of concurrent semantics. What would it mean in such a case? One suggestion, which we have not yet explored, is that it may correspond to some form of modality for temporal logic, with different modes capturing ``eventually'', ``always'', ``soon'', and so forth.

Additionally, we believe that the general approach of modeling dependency structures may be of use in mechanized representations of knowledge (especially mathematical knowledge), such as that pursed by the Formal Abstracts project \cite{fabstract}. Also related to knowledge representation, we believe the notion of a Merkle-trace may be of some use in considering issues of logical atomism à la Wittgenstein.

As discussed in the introduction, the closest related work to this is the study of event structures. An important line of work on event structures has been seeking to recover the generality of a general event structure (with choice) while maintaining some form of representation theorem in domains, as in \cite{nielsen1981petri}. One recent article in this direction (\cite{DBLP:journals/corr/abs-1802-03726}), summarizes its key idea as: ``if in a general event structure an event has conflicting classes of causal histories, then it should split in several copies when generating the corresponding connected event structure.'' This insight is very close to the action taken by the Bruns-Lakser completion in the current work.

Our work also hopes to provide a contribution in the direction of reconciling choice and conflict in semantics of  general event structures, although conflict is not yet integrated into what is presented in the current paper. A central element of future work is to try to re-equip DSCs with a suitable notion of conflict. One key preliminary observation in that direction begins with the fact that conflict structures themselves (often known in event literature often as ``contexts''), are represented as a powerset of events that is downward-closed. That is, if some set \(s\) is a valid context (i.e. is conflict-free), then so too is every subset. The observation is that this same mathematical structure already has a direct topological interpretation -- a presentation of a simplicial complex (such that sets of cardinality \(n\) form the \(n\)-cells). The downward closure condition then amounts to the fact that to have a face, one must necessarily have its boundaries. Our hope is that from this observation, we can then find a way to relate the topology induced by dependency and choice to that induced by conflict into a unitary spatial structure.

It is our suspicion that focus on compositional semantics of concurrent processes has \textit{necessarily} led to simplifying assumptions / enforced behaviors in process calculi, because trace semantics of concurrent programs do not compose, since they are only the 1-cells of a more complicated topological object. If this is the case, then what composes is the full cellular structure, including the higher structure of incompatibilities, 2-incompatibilities, etc.

Another family of topological models of concurrency arises from directed algebraic topology, as in \cite{fajstrup2016directed}. In future work, we hope that the constructions here may be seen in such a light. In particular, many models of directed topology (in particular, those without self-loops) begin with a classical notion of a topological space constructed by points and data concerning open sets, and further equip those opens with some form of ordering. In a sense, this is again trying to reconcile two different innate topologies (those induced by the opens, corresponding to conflicts, and those induced by the orderings, corresponding to dependencies with choice). Topological information is then extracted from such structures by developing a notion of algebraic topology (homology, homotopy, etc.) over such directed spaces. Here, rather than having a full collection of opens, we hope to start with the localic ``skeleton'' of such, and study its homology relative to the space of conflict-free event sets. To do so requires picking (or, likely, constructing) a correct notion of finite homology suited to our purposes. We also suspect that methods of discrete Morse theory, as in \cite{forman2002user} may come into play, in particular when studying conflicts through the lens of dependency problems, where the conditions on such constructions seem to induce a discrete Morse structure.

The notion of equipping a spatial structure with a further poset structure goes back before
directed topology to the work of physicist Rafel Sorkin who suggested it as a way of approximating the
topological and causal structure of spacetime events \cite{sorkin1983posets}.  The constructions here and elsewhere in directed
topology may find use in helping to realize Sorkin’s original program (which later shifted into the study of topology-less ``causal sets'').  Another application to the natural sciences is to molecular biology where, in order to produce some metabolite, an organism needs to be able to produce
all precursors which are not present in the environment and one would like to understand how this requirement
constrains the possible structure of metabolic reaction networks.

The connection explored between lattices of certain forms and counting problems is one of a number of potential connections to the combinatorics of extremal set theory as in \cite{stanley2013algebraic}. Also potentially related is the Stanley-Reisner ring \cite{francisco2014survey}, which also connects lattice combinatorics to algebraic topology.




%





\begin{acks}
The initial impetus for this research grew out of many interesting conversations with Herbert Valerio Riedel regarding package repository semantics. The need to render these ideas more formal was due to the careful interrogation of Simon Peyton-Jones. Jonas Frey first pointed out to us the relationship to event structures. As this work has developed, it has relied on the constant input of participants in the NY Category Theory seminar, especially Thomas Lawler, James Deikun (who is very good at telling us what's wrong), and Jay Sulzberger (who is unmatched at helping us see what's right). We have also gained insight from useful discussions with Sridhar Ramesh and Martin Bendersky, as well as participants at Category Theory Octoberfest 2019 (especially Todd Trimble), and the MIT Categories Seminar (especially David Spivak and Brendan Fong), where early versions of some of this were presented. We are also very happy that Richard Stanley provided a gracious answer to our question on mathoverflow.
\end{acks}

\newpage
 \nocite{*}
 \bibliography{topo-logic-draft}


\begin{thebibliography}{29}


\ifx \showCODEN    \undefined \def \showCODEN     #1{\unskip}     \fi
\ifx \showDOI      \undefined \def \showDOI       #1{#1}\fi
\ifx \showISBNx    \undefined \def \showISBNx     #1{\unskip}     \fi
\ifx \showISBNxiii \undefined \def \showISBNxiii  #1{\unskip}     \fi
\ifx \showISSN     \undefined \def \showISSN      #1{\unskip}     \fi
\ifx \showLCCN     \undefined \def \showLCCN      #1{\unskip}     \fi
\ifx \shownote     \undefined \def \shownote      #1{#1}          \fi
\ifx \showarticletitle \undefined \def \showarticletitle #1{#1}   \fi
\ifx \showURL      \undefined \def \showURL       {\relax}        \fi
\providecommand\bibfield[2]{#2}
\providecommand\bibinfo[2]{#2}
\providecommand\natexlab[1]{#1}
\providecommand\showeprint[2][]{arXiv:#2}

\bibitem[\protect\citeauthoryear{Abate, Di~Cosmo, Treinen, and
  Zacchiroli}{Abate et~al\mbox{.}}{2012}]%
        {abate2012dependency}
\bibfield{author}{\bibinfo{person}{Pietro Abate}, \bibinfo{person}{Roberto
  Di~Cosmo}, \bibinfo{person}{Ralf Treinen}, {and} \bibinfo{person}{Stefano
  Zacchiroli}.} \bibinfo{year}{2012}\natexlab{}.
\newblock \showarticletitle{Dependency solving: a separate concern in component
  evolution management}.
\newblock \bibinfo{journal}{\emph{Journal of Systems and Software}}
  \bibinfo{volume}{85}, \bibinfo{number}{10} (\bibinfo{year}{2012}),
  \bibinfo{pages}{2228--2240}.
\newblock


\bibitem[\protect\citeauthoryear{Angiuli, Morehouse, Licata, and
  Harper}{Angiuli et~al\mbox{.}}{2014}]%
        {angiuli2014homotopical}
\bibfield{author}{\bibinfo{person}{Carlo Angiuli}, \bibinfo{person}{Edward
  Morehouse}, \bibinfo{person}{Daniel~R Licata}, {and} \bibinfo{person}{Robert
  Harper}.} \bibinfo{year}{2014}\natexlab{}.
\newblock \showarticletitle{Homotopical patch theory}. In
  \bibinfo{booktitle}{\emph{ACM SIGPLAN Notices}}, Vol.~\bibinfo{volume}{49}.
  ACM, \bibinfo{pages}{243--256}.
\newblock


\bibitem[\protect\citeauthoryear{Baldan, Bruni, Corradini, Gadducci, Melgratti,
  and Montanari}{Baldan et~al\mbox{.}}{2018}]%
        {DBLP:journals/corr/abs-1802-03726}
\bibfield{author}{\bibinfo{person}{Paolo Baldan}, \bibinfo{person}{Roberto
  Bruni}, \bibinfo{person}{Andrea Corradini}, \bibinfo{person}{Fabio Gadducci},
  \bibinfo{person}{Hern{\'{a}}n~C. Melgratti}, {and} \bibinfo{person}{Ugo
  Montanari}.} \bibinfo{year}{2018}\natexlab{}.
\newblock \showarticletitle{Event Structures for Petri nets with Persistence}.
\newblock \bibinfo{journal}{\emph{CoRR}}  \bibinfo{volume}{abs/1802.03726}
  (\bibinfo{year}{2018}).
\newblock
\showeprint[arxiv]{1802.03726}
\urldef\tempurl%
\url{http://arxiv.org/abs/1802.03726}
\showURL{%
\tempurl}


\bibitem[\protect\citeauthoryear{Ball, Pultr, and Wayland}{Ball
  et~al\mbox{.}}{2016}]%
        {ball2016dedekind}
\bibfield{author}{\bibinfo{person}{Richard~N Ball}, \bibinfo{person}{Ale{\v{s}}
  Pultr}, {and} \bibinfo{person}{Joanne~Walters Wayland}.}
  \bibinfo{year}{2016}\natexlab{}.
\newblock \showarticletitle{The Dedekind MacNeille site completion of a meet
  semilattice}.
\newblock \bibinfo{journal}{\emph{Algebra universalis}} \bibinfo{volume}{76},
  \bibinfo{number}{2} (\bibinfo{year}{2016}), \bibinfo{pages}{183--197}.
\newblock


\bibitem[\protect\citeauthoryear{Bruns and Lakser}{Bruns and Lakser}{1970}]%
        {bruns1970injective}
\bibfield{author}{\bibinfo{person}{G{\"u}nter Bruns} {and}
  \bibinfo{person}{Harry Lakser}.} \bibinfo{year}{1970}\natexlab{}.
\newblock \showarticletitle{Injective hulls of semilattices}.
\newblock \bibinfo{journal}{\emph{Canad. Math. Bull.}} \bibinfo{volume}{13},
  \bibinfo{number}{1} (\bibinfo{year}{1970}), \bibinfo{pages}{115--118}.
\newblock


\bibitem[\protect\citeauthoryear{Di~Cosmo}{Di~Cosmo}{2006}]%
        {di2006edos}
\bibfield{author}{\bibinfo{person}{Roberto Di~Cosmo}.}
  \bibinfo{year}{2006}\natexlab{}.
\newblock \showarticletitle{EDOS deliverable WP2-D2. 2: Report on Formal
  Management of Software Dependencies}.
\newblock  (\bibinfo{year}{2006}).
\newblock


\bibitem[\protect\citeauthoryear{Fajstrup, Goubault, Haucourt, Mimram, and
  Raussen}{Fajstrup et~al\mbox{.}}{2016}]%
        {fajstrup2016directed}
\bibfield{author}{\bibinfo{person}{Lisbeth Fajstrup}, \bibinfo{person}{Eric
  Goubault}, \bibinfo{person}{Emmanuel Haucourt}, \bibinfo{person}{Samuel
  Mimram}, {and} \bibinfo{person}{Martin Raussen}.}
  \bibinfo{year}{2016}\natexlab{}.
\newblock \bibinfo{booktitle}{\emph{Directed algebraic topology and
  concurrency}}. Vol.~\bibinfo{volume}{138}.
\newblock \bibinfo{publisher}{Springer}.
\newblock


\bibitem[\protect\citeauthoryear{Forman}{Forman}{2002}]%
        {forman2002user}
\bibfield{author}{\bibinfo{person}{Robin Forman}.}
  \bibinfo{year}{2002}\natexlab{}.
\newblock \showarticletitle{A user's guide to discrete Morse theory}.
\newblock \bibinfo{journal}{\emph{S{\'e}m. Lothar. Combin}}
  \bibinfo{volume}{48} (\bibinfo{year}{2002}), \bibinfo{pages}{35pp}.
\newblock


\bibitem[\protect\citeauthoryear{Francisco, Mermin, and Schweig}{Francisco
  et~al\mbox{.}}{2014}]%
        {francisco2014survey}
\bibfield{author}{\bibinfo{person}{Christopher~A Francisco},
  \bibinfo{person}{Jeffrey Mermin}, {and} \bibinfo{person}{Jay Schweig}.}
  \bibinfo{year}{2014}\natexlab{}.
\newblock \showarticletitle{A survey of Stanley--Reisner theory}.
\newblock In \bibinfo{booktitle}{\emph{Connections between algebra,
  combinatorics, and geometry}}. \bibinfo{publisher}{Springer},
  \bibinfo{pages}{209--234}.
\newblock


\bibitem[\protect\citeauthoryear{Gehrke and Van~Gool}{Gehrke and
  Van~Gool}{2014}]%
        {gehrke2014distributive}
\bibfield{author}{\bibinfo{person}{Mai Gehrke} {and} \bibinfo{person}{Samuel~J
  Van~Gool}.} \bibinfo{year}{2014}\natexlab{}.
\newblock \showarticletitle{Distributive envelopes and topological duality for
  lattices via canonical extensions}.
\newblock \bibinfo{journal}{\emph{Order}} \bibinfo{volume}{31},
  \bibinfo{number}{3} (\bibinfo{year}{2014}), \bibinfo{pages}{435--461}.
\newblock


\bibitem[\protect\citeauthoryear{Gratzer}{Gratzer}{2009}]%
        {gratzer2009lattice}
\bibfield{author}{\bibinfo{person}{George Gratzer}.}
  \bibinfo{year}{2009}\natexlab{}.
\newblock \bibinfo{booktitle}{\emph{Lattice theory: First concepts and
  distributive lattices}}.
\newblock \bibinfo{publisher}{Courier Corporation}.
\newblock


\bibitem[\protect\citeauthoryear{Hales}{Hales}{2019}]%
        {fabstract}
\bibfield{author}{\bibinfo{person}{Thomas Hales}.}
  \bibinfo{year}{2019}\natexlab{}.
\newblock \bibinfo{title}{Formal Abstracts}.
\newblock \bibinfo{howpublished}{\url{https://formalabstracts.github.io/}}.
\newblock
\newblock
\shownote{Accessed: 2020-01-30.}


\bibitem[\protect\citeauthoryear{Johnstone}{Johnstone}{1982}]%
        {johnstone1982stone}
\bibfield{author}{\bibinfo{person}{Peter~T Johnstone}.}
  \bibinfo{year}{1982}\natexlab{}.
\newblock \bibinfo{booktitle}{\emph{Stone spaces}}. Vol.~\bibinfo{volume}{3}.
\newblock \bibinfo{publisher}{Cambridge university press}.
\newblock


\bibitem[\protect\citeauthoryear{MacNeille}{MacNeille}{1937}]%
        {macneille1937partially}
\bibfield{author}{\bibinfo{person}{Holbrook~Mann MacNeille}.}
  \bibinfo{year}{1937}\natexlab{}.
\newblock \showarticletitle{Partially ordered sets}.
\newblock \bibinfo{journal}{\emph{Trans. Amer. Math. Soc.}}
  \bibinfo{volume}{42}, \bibinfo{number}{3} (\bibinfo{year}{1937}),
  \bibinfo{pages}{416--460}.
\newblock


\bibitem[\protect\citeauthoryear{Merkle}{Merkle}{1987}]%
        {merkle1987digital}
\bibfield{author}{\bibinfo{person}{Ralph~C Merkle}.}
  \bibinfo{year}{1987}\natexlab{}.
\newblock \showarticletitle{A digital signature based on a conventional
  encryption function}. In \bibinfo{booktitle}{\emph{Conference on the theory
  and application of cryptographic techniques}}. Springer,
  \bibinfo{pages}{369--378}.
\newblock


\bibitem[\protect\citeauthoryear{Mimram and Di~Giusto}{Mimram and
  Di~Giusto}{2013}]%
        {mimram2013categorical}
\bibfield{author}{\bibinfo{person}{Samuel Mimram} {and} \bibinfo{person}{Cinzia
  Di~Giusto}.} \bibinfo{year}{2013}\natexlab{}.
\newblock \showarticletitle{A categorical theory of patches}.
\newblock \bibinfo{journal}{\emph{Electronic notes in theoretical computer
  science}}  \bibinfo{volume}{298} (\bibinfo{year}{2013}),
  \bibinfo{pages}{283--307}.
\newblock


\bibitem[\protect\citeauthoryear{Moggi}{Moggi}{1991}]%
        {moggi1991notions}
\bibfield{author}{\bibinfo{person}{Eugenio Moggi}.}
  \bibinfo{year}{1991}\natexlab{}.
\newblock \showarticletitle{Notions of computation and monads}.
\newblock \bibinfo{journal}{\emph{Information and computation}}
  \bibinfo{volume}{93}, \bibinfo{number}{1} (\bibinfo{year}{1991}),
  \bibinfo{pages}{55--92}.
\newblock


\bibitem[\protect\citeauthoryear{Mokhov, Mitchell, and Jones}{Mokhov
  et~al\mbox{.}}{2018}]%
        {mitchell:shake_24_sep_2018}
\bibfield{author}{\bibinfo{person}{Andrey Mokhov}, \bibinfo{person}{Neil
  Mitchell}, {and} \bibinfo{person}{Simon~Peyton Jones}.}
  \bibinfo{year}{2018}\natexlab{}.
\newblock \showarticletitle{Build Systems \`a la Carte}.
\newblock \bibinfo{journal}{\emph{Proceedings ACM Programing Languages}}
  \bibinfo{volume}{2}, Article \bibinfo{articleno}{79},
  \bibinfo{numpages}{79:1--79:29}~pages.
\newblock
\showISSN{2475-1421}
\urldef\tempurl%
\url{https://doi.org/10.1145/3236774}
\showDOI{\tempurl}


\bibitem[\protect\citeauthoryear{Nielsen, Plotkin, and Winskel}{Nielsen
  et~al\mbox{.}}{1981}]%
        {nielsen1981petri}
\bibfield{author}{\bibinfo{person}{Mogens Nielsen}, \bibinfo{person}{Gordon
  Plotkin}, {and} \bibinfo{person}{Glynn Winskel}.}
  \bibinfo{year}{1981}\natexlab{}.
\newblock \showarticletitle{Petri nets, event structures and domains, part I}.
\newblock \bibinfo{journal}{\emph{Theoretical Computer Science}}
  \bibinfo{volume}{13}, \bibinfo{number}{1} (\bibinfo{year}{1981}),
  \bibinfo{pages}{85--108}.
\newblock


\bibitem[\protect\citeauthoryear{Pratt}{Pratt}{1986}]%
        {pratt1986modeling}
\bibfield{author}{\bibinfo{person}{Vaughan Pratt}.}
  \bibinfo{year}{1986}\natexlab{}.
\newblock \showarticletitle{Modeling concurrency with partial orders}.
\newblock \bibinfo{journal}{\emph{International Journal of Parallel
  Programming}} \bibinfo{volume}{15}, \bibinfo{number}{1}
  (\bibinfo{year}{1986}), \bibinfo{pages}{33--71}.
\newblock


\bibitem[\protect\citeauthoryear{{Schultz} and {Spivak}}{{Schultz} and
  {Spivak}}{2017}]%
        {schultz2017temporal}
\bibfield{author}{\bibinfo{person}{Patrick {Schultz}} {and}
  \bibinfo{person}{David~I. {Spivak}}.} \bibinfo{year}{2017}\natexlab{}.
\newblock \showarticletitle{{Temporal Type Theory: A topos-theoretic approach
  to systems and behavior}}.
\newblock \bibinfo{journal}{\emph{arXiv e-prints}}, Article
  \bibinfo{articleno}{arXiv:1710.10258} (\bibinfo{date}{Oct}
  \bibinfo{year}{2017}), \bibinfo{numpages}{arXiv:1710.10258}~pages.
\newblock
\showeprint[arxiv]{math.CT/1710.10258}


\bibitem[\protect\citeauthoryear{Sorkin}{Sorkin}{1983}]%
        {sorkin1983posets}
\bibfield{author}{\bibinfo{person}{Rafael~D Sorkin}.}
  \bibinfo{year}{1983}\natexlab{}.
\newblock \showarticletitle{Posets as lattice topologies}. In
  \bibinfo{booktitle}{\emph{General Relativity and Gravitation, Volume 1}},
  Vol.~\bibinfo{volume}{1}. \bibinfo{pages}{635}.
\newblock


\bibitem[\protect\citeauthoryear{Sperner}{Sperner}{1928}]%
        {sperner1928satz}
\bibfield{author}{\bibinfo{person}{Emanuel Sperner}.}
  \bibinfo{year}{1928}\natexlab{}.
\newblock \showarticletitle{Ein satz {\"u}ber untermengen einer endlichen
  menge}.
\newblock \bibinfo{journal}{\emph{Mathematische Zeitschrift}}
  \bibinfo{volume}{27}, \bibinfo{number}{1} (\bibinfo{year}{1928}),
  \bibinfo{pages}{544--548}.
\newblock


\bibitem[\protect\citeauthoryear{Stanley}{Stanley}{2019}]%
        {343183}
\bibfield{author}{\bibinfo{person}{Richard Stanley}.}
  \bibinfo{year}{2019}\natexlab{}.
\newblock \bibinfo{title}{Explicit calculation of the width of a product of
  chains (i.e. maximal rank size)}.
\newblock \bibinfo{howpublished}{MathOverflow}.
\newblock
\showeprint{https://mathoverflow.net/q/343183}
\urldef\tempurl%
\url{https://mathoverflow.net/q/343183}
\showURL{%
\tempurl}
\newblock
\shownote{URL:https://mathoverflow.net/q/343183 (version: 2019-10-05).}


\bibitem[\protect\citeauthoryear{Stanley}{Stanley}{2013}]%
        {stanley2013algebraic}
\bibfield{author}{\bibinfo{person}{Richard~P Stanley}.}
  \bibinfo{year}{2013}\natexlab{}.
\newblock \showarticletitle{Algebraic combinatorics}.
\newblock \bibinfo{journal}{\emph{Springer}}  \bibinfo{volume}{20}
  (\bibinfo{year}{2013}), \bibinfo{pages}{22}.
\newblock


\bibitem[\protect\citeauthoryear{Stern}{Stern}{1999}]%
        {stern1999semimodular}
\bibfield{author}{\bibinfo{person}{Manfred Stern}.}
  \bibinfo{year}{1999}\natexlab{}.
\newblock \bibinfo{booktitle}{\emph{Semimodular lattices: theory and
  applications}}. Vol.~\bibinfo{volume}{73}.
\newblock \bibinfo{publisher}{Cambridge University Press}.
\newblock


\bibitem[\protect\citeauthoryear{Swierstra and L{\"o}h}{Swierstra and
  L{\"o}h}{2014}]%
        {swierstra2014semantics}
\bibfield{author}{\bibinfo{person}{Wouter Swierstra} {and}
  \bibinfo{person}{Andres L{\"o}h}.} \bibinfo{year}{2014}\natexlab{}.
\newblock \showarticletitle{The semantics of version control}. In
  \bibinfo{booktitle}{\emph{Proceedings of the 2014 ACM International Symposium
  on New Ideas, New Paradigms, and Reflections on Programming \& Software}}.
  \bibinfo{pages}{43--54}.
\newblock


\bibitem[\protect\citeauthoryear{van Dalen and Troelstra}{van Dalen and
  Troelstra}{1988}]%
        {van1988troelstra}
\bibfield{author}{\bibinfo{person}{Dirk van Dalen} {and}
  \bibinfo{person}{Anne~S. Troelstra}.} \bibinfo{year}{1988}\natexlab{}.
\newblock \bibinfo{title}{Constructivism in Mathematics: An Introduction Vol. 1
  and Vol. 2}.
\newblock
\newblock


\bibitem[\protect\citeauthoryear{Vickers}{Vickers}{1996}]%
        {vickers1996topology}
\bibfield{author}{\bibinfo{person}{Steven Vickers}.}
  \bibinfo{year}{1996}\natexlab{}.
\newblock \bibinfo{booktitle}{\emph{Topology via logic}}.
\newblock \bibinfo{publisher}{Cambridge University Press}.
\newblock


\end{thebibliography}

\end{document}